\let\myPushQED=\pushQED
\let\myPopQED=\popQED
\newcommand{\myignore}[1]{}
\newenvironment{proof*}
  {\let\pushQED=\myignore\begin{proof}\let\pushQED=\myPushQED}
  {\def\popQED{}\end{proof}\let\popQED=\myPopQED}
 \gdef\xxxmark{%
   \expandafter\ifx\csname @mpargs\endcsname\relax 
     \expandafter\ifx\csname @captype\endcsname\relax 
       \marginpar{xxx}
     \else
       xxx 
     \fi
   \else
     xxx 
   \fi}
 \gdef\xxx{\@ifnextchar[\xxx@lab\xxx@nolab}
 \long\gdef\xxx@lab[#1]#2{{\bf [\xxxmark #2 ---{\sc #1}]}}
 \long\gdef\xxx@nolab#1{{\bf [\xxxmark #1]}}
\newtheorem{theorem}{Theorem}               
\newtheorem{lemma}[theorem]{Lemma}          
\newcommand{\eps}{\varepsilon}
\newcommand{\twodots}{\mathinner{\ldotp\ldotp}}
\newcommand{\E}{\mathbf{E}}
\newcommand{\req}[1]{(\ref{#1})}
\newcommand{\ceil}[1]{\lceil{#1}\rceil}
\newcommand{\evt}{\mathcal{E}}
\let\phi=\varphi
\def\tsize{t}
\renewcommand{\th}{\ifmmode{^{\textrm{th}}}\else{\textsuperscript{th}\ }\fi}
\newcommand{\nd}{\ifmmode{^{\textrm{nd}}}\else{\textsuperscript{nd}\ }\fi}
\newcommand{\rd}{\ifmmode{^{\textrm{rd}}}\else{\textsuperscript{rd}\ }\fi}
\newcommand{\drop}[1]{}
\newcommand{\qed}{\hbox{\rule{6pt}{6pt}}}
\newenvironment{proof}[1][]{\paragraph{Proof{#1}}}{\hfill\qed\medskip\\}
\title{On the $k$-Independence Required by Linear Probing and Minwise Independence\footnote{A preliminary version of this paper was presented at
{\em The 37th International Colloquium on Automata, Languages and Programming (ICALP'10)} {\cite{patrascu10kwise-lb}}.}} 
\author{
     Mihai P\v{a}tra\c{s}cu\footnote{Passed away June 5, 2012.}
\and Mikkel Thorup\footnote{University of Copenhagen. Research
partly supported by an Advanced Grant from the Danish Council for Independent Research under the Sapere Aude research carrier programme.}}
\begin{document}

\maketitle

\begin{abstract}
We show that linear probing requires 5-independent hash functions for
expected constant-time performance, matching an upper bound of [Pagh
et al.~STOC'07,SICOMP'09]. More precisely, we construct a random 4-independent hash function
yielding expected logarithmic search time for certain keys.
For $(1+\eps)$-approximate minwise independence, we show that
$\Omega(\lg \frac{1}{\eps})$-independent hash functions are required,
matching an upper bound of [Indyk, SODA'99, JALG'01].
We also show that the very fast 2-independent multiply-shift scheme of
Dietzfelbinger [STACS'96] fails badly in both applications.
\end{abstract}

\section{Introduction}
The concept of $k$-independence was introduced by Wegman and
Carter~\cite{wegman81kwise} in FOCS'79 and has been the cornerstone of
our understanding of hash functions ever since. Formally, we think of a hash function $h: [u] \to [\tsize]$ as a random variable distributed over
$[\tsize]^{[u]}$. Here $[s]=\{0,\ldots,s-1\}$. 
We say that $h$ is $k$-independent
if (1) for any distinct keys $x_1, \dots, x_k \in [u]$, the hash values
$h(x_1), \dots, h(x_k)$ are independent random variables; and (2) for
any fixed $x$, $h(x)$ is uniformly distributed in $[\tsize]$.

As the concept of independence is fundamental to probabilistic
analysis, $k$-independent hash functions are both natural and powerful in
algorithm analysis. They allow us to replace the heuristic assumption
of truly random hash functions that are uniformly distributed in $[\tsize]^{[u]}$, hence
needing $u\lg \tsize$ random bits ($\lg=\log_2$), with real implementable hash
functions that are still ``independent enough'' to yield provable
performance guarantees similar to those proved with true randomness. We are then left with the natural goal of
understanding the independence required by algorithms. 

Once we have proved that $k$-independence suffices for a hashing-based 
randomized algorithm, we are free to use {\em any\/} $k$-independent hash function.
The canonical construction of a $k$-independent hash function is based on
polynomials of degree $k-1$. Let $p \ge u$ be prime. Picking random
$a_0, \dots, a_{k-1} \in \{0, \dots, p-1\}$, the hash function is
defined by:
\[ h(x) = \Big( \big( a_{k-1} x^{k-1} + \cdots + a_1 x + a_0 \big)
        \bmod{p} \Big)\]
If we want to limit the range of hash values to $[t]$, we use
$h(x)\bmod t$. This preserves requirement (1) of independence among
$k$ hash values. Requirement (2) of uniformity is close to 
satisfied if $p\gg t$.

Sometimes 2-independence suffices. For example, 2-independence implies
so-called universality \cite{carter77universal}; namely that the probability
of two keys $x$ and $y$ colliding with $h(x)=h(y)$ is $1/\tsize$; or close
to $1/t$ if the uniformity of (2) is only approximate.
Universality implies expected constant time performance of hash tables
implemented with chaining. Universality also suffices
for the 2-level hashing of Fredman et al.
\cite{fredman84dict}, yielding static hash tables with constant query time.

At the other end of the spectrum, when dealing with problems involving $n$ objects, $O(\lg n)$-independence suffices in
a vast majority of applications. One reason for this is the Chernoff
bounds of~\cite{schmidt95chernoff} for $k$-independent events, whose
probability bounds differ from the full-independence Chernoff bound by
$2^{-\Omega(k)}$. Another reason is that random graphs with $O(\lg
n)$-independent edges~\cite{alon08kwise} share many of the properties
of truly random graphs.

The independence measure has long been central to the study of
randomized algorithms. It applies not only to hash functions, but also 
to pseudo-random number generators viewed as assigning hash values to $0,1,2,..$.  For example, \cite{karloff93prg} considers variants of
QuickSort, \cite{alon99linear} consider the maximal bucket size for
hashing with chaining, and   
\cite{cohen09cuckoo5,dietzfelbinger09cuckoo-bas} consider Cuckoo hashing.
In several cases \cite{alon99linear,dietzfelbinger09cuckoo-bas,karloff93prg},
it is proved that linear transformations $x\mapsto \big( (ax + b) \bmod p \big)$ do not suffice for good performance, hence that 
2-independence is not in itself sufficient.

In this paper, we study the independence for two important
applications in which it is already known that 2-independence does not
suffice: linear probing and minwise-independent hashing.

\subsection{Linear probing} 
Linear probing is a classic implementation of hash tables. It uses a
hash function $h$ to map a set of $n$ keys into an array of size
$\tsize$. When inserting $x$, if the desired location $h(x)\in
[\tsize]$ is already occupied, the algorithm scans $h(x)+1, h(x)+2,
\dots,\tsize-1,0,1,\ldots$ until an empty location is found, and
places $x$ there. The query algorithm starts at $h(x)$ and scans
either until it finds $x$, or runs into an empty position, which
certifies that $x$ is not in the hash table.  When the query search is
unsuccessful, that is, when $x$ is not stored, the query algorithm
scans exactly the same locations as an insert of $x$. A general bound
on the query time is hence also a bound on the insertion time.

We generally assume constant load of the hash table, e.g.~the number
of keys is $n \le \frac{2}{3}\tsize$.

This classic data structure is one of the most popular implementations of hash
tables, due to its unmatched simplicity and efficiency. 
The practical use of linear probing dates back at least to 1954 to an
assembly program by Samuel, Amdahl, Boehme (c.f. \cite{knuth-vol3}).
On modern
architectures, access to memory is done in cache lines (of much more
than a word), so inspecting a few consecutive values typically
translates into a single memory access. Even if the scan straddles a
cache line, the behavior will still be better than a second random
memory access on architectures with prefetching. Empirical
evaluations~\cite{black98linprobe,heileman05linprobe,pagh04cuckoo}
confirm the practical advantage of linear probing over other known
schemes, while cautioning~\cite{heileman05linprobe,thorup12kwise} that
it behaves quite unreliably with weak hash functions (such as
2-independent). Taken together, these findings form a strong
motivation for theoretical analysis.

Linear probing was shown to take expected constant time for
any operation in 1963 by Knuth~\cite{knuth63linprobe}, in a report which is now
regarded as the birth of algorithm analysis. This analysis, however, 
assumed a truly random hash function. 

A central open question of Wegman and Carter~\cite{wegman81kwise} was
how linear probing behaves with $k$-independence. Siegel and
Schmidt~\cite{schmidt90hashing,siegel95hashing} showed that $O(\lg
n)$-independence suffices for any operation to take expected constant time. 
Pagh et al.~\cite{pagh07linprobe} showed that just $5$-independence
suffices for this expected constant operation time. 
They also showed that linear transformations do not suffice, hence that 2-independence is not in itself sufficient.

Here we close this line
of work, showing that $4$-independence is not in itself sufficient for
expected constant operation time. We display
a concrete combination of keys and a 4-independent random hash function
where searching certain keys takes super constant expected time. This
shows that the $5$-independence result of Pagh et al.~\cite{pagh07linprobe} 
is best possible.

\begin{table}[t]
  \centering
  \begin{tabular}{|l|c|c|c|c|}
    \hline
    Independence & 2 & 3 & 4 & $\ge 5$ \\
    \hline
    Query time & $\Theta(\sqrt{n})$
              & $\Theta(\lg n)$ & $\Theta(\lg n)$ & $\Theta(1)$ \\
    \hline
    Construction time & $\Theta(n\lg n)$ & $\Theta(n\lg n)$ 
              & $\Theta(n)$ & $\Theta(n)$ \\
    \hline
  \end{tabular}
\caption{Expected time bounds for linear probing with a poor $k$-independent
  hash function. The bounds are worst-case expected, e.g., a lower bound
for the query means that there is a concrete combination of stored set, query 
key, and $k$-independent hash function with this expected search time while the upper-bound means that this is the worst expected time for any such combination. Construction time refers to the worst-case expected total time for inserting $n$   keys starting from an empty table.}
\label{tab:lin-probe}
\end{table}

We will, in fact, provide a complete 
understanding of linear probing with low independence as summarized in
Table~\ref{tab:lin-probe}. This includes a new upper and lower bound
of $\Theta(\sqrt n)$ for the query time with 2-independence. All the
other upper bounds in the table are contained, at least implicitly, in
\cite{pagh07linprobe}. On the lower bound side, the only
lower bound known from \cite{pagh07linprobe} was the $\Omega(n\log n)$ lower bound on the construction
time with 2-independence, which we show here also holds with 3-independence.

\subsection{Minwise independence}
The concept of minwise independence was introduced by two classic algorithms: detecting
near-duplicate documents~\cite{broder98minwise,broder97minwise} and
approximating the size of the transitive
closure~\cite{cohen97minwise}.  The basic step in these algorithms is
estimating the size of the intersection of pairs of sets, relative to
their union: for $A$ and $B$, we want to estimate $\frac{|A\cap B|}{|A
  \cup B|}$ (the {\em Jaccard similarity coefficient}). To do this
efficiently, one can choose a hash function $h$ and maintain $\min
h(A)$ as the sketch of an entire set $A$. If the hash function is
truly random, we have $\Pr[\min h(A) = \min h(B)] = \frac{|A\cap
  B|}{|A \cup B|}$. Thus, by repeating with several hash functions, 
the Jaccard
coefficient can be estimated up to a small error.

To make this idea work, the property required of the hash
function is \emph{minwise independence}. Formally, a random hash
function $h: [u] \to [u]$ is said to be minwise
independent if, for any set $S \subset [u]$ and any $x \notin S$, we
have $\Pr_{h}[ h(x) < \min h(S) ] = \frac{1}{|S|+1}$.  In
other words, $x$ is the minimum of $S \cup \{x\}$ only with its
``fair'' probability $\frac{1}{|S|+1}$.  

A hash function providing a truly random permutation on $[u]$ is
minwise independent, but representing such a function
requires $\Theta(u)$ bits \cite{broder98minwise}. Therefore the definition is relaxed to
\emph{$\eps$-minwise independent}, requiring that $\Pr_{h}[ h(x) < \min h(S) ]
= \frac{1 \pm \eps}{|S|+1}$.  Using such a function, we will have
$\Pr[\min h(A) = \min h(B)] = (1\pm \eps) \frac{|A\cap B|}{|A \cup
  B|}$. Thus, the $\eps$ parameter of the minwise independent hash function 
dictates the best approximation achievable in the algorithms (which
\emph{cannot} be improved by repetition).

Broder et al. \cite{broder98minwise} proved that 
linear transformations are only $\Omega(\log n)$-minwise independent.
Indyk~\cite{indyk01minwise} provided the only known implementation of
minwise independence with provable guarantees, showing that $O(\lg
\frac{1}{\eps})$-independent hash functions are $\eps$-minwise
independent.  

In this paper, we show for any $\eps>0$, that there exist $\Omega(\lg
\frac{1}{\eps})$-independent hash functions which are no better than
$\eps$-minwise independent, hence that Indyk's result is best possible.

\subsection{Concrete Schemes.}
Our results provide a powerful understanding of a natural combinatorial
resource (independence) for two important algorithmic questions. In
other words, they provide limits on how far the \emph{paradigm} of
independence can take us. Note, however, that independence is only
one of many properties that concrete hash schemes can possess. In a particular
application, a hash scheme can behave much better than its
independence guarantees, if it has some other probabilistic property
unrelated to independence. Obviously, proving that a concrete hashing
scheme works is not as attractive as proving that every
$k$-independent scheme works, including more efficient
$k$-independent schemes found in the future. However, if low independence
does not work, then a concrete scheme may be the best we can hope for.

For both of our applications, we know that the classic 
linear transformation $x\mapsto \big( (ax + b) \bmod p \big)$ 
does not give good bounds~\cite{broder98minwise,pagh07linprobe}.
However, there is a much more practical 2-independent hash function; namely Dietzfelbinger's
multiply-shift scheme~\cite{dietzfel96universal}, which on some
computers is 10 times as fast~\cite{thorup00universal}. To hash
$w$-bit integers to $\ell$-bit integers, $\ell\leq w$, the scheme
picks two random $2w$-bit integers $a$ and $b$, and maps
$x\mapsto$\texttt{(}$a$\texttt{*}$x$\texttt{+}$b$\texttt{)>>}$(2w-\ell)$.
The operators are those from the programming language C \cite{KR88},
where \texttt{*} and \texttt{+} are $2w$-bit multiplication and addition,
and \texttt{>>} is an unsigned right shift. 

We are not aware of any previous papers considering the concrete limits
of multiply-shift in concrete applications, but in this paper, we prove
that linear probing with multiply-shift hashing suffers from
$\Omega(\lg n)$ expected operation times on some inputs. Similarly, we
show that minwise independent hashing may have a very large
approximation error of $\eps = \Omega(\lg n)$. These lower bounds
match those from~\cite{broder98minwise,pagh07linprobe} for the classic
linear transformations, and may not be surprising given the ``moral
similarity'' of the schemes, but they do require different rather
involved arguments. We feel that this effort to understand the limits
of multiply-shift is justified, as it brings the theoretical lower
bounds more in line with programming reality.

\paragraph{Later work.} 
After the negative findings of the current paper, we continued our
quest for concrete hashing schemes that were both efficient and possessed
good probabilistic properties for our target applications. We
considered simple tabulation hashing~\cite{patrascu12charhash}, which
breaks fundamentally from polynomial hashing schemes. Tabulation based
hashing is comparable in speed to multiply-shift
hashing~\cite{dietzfel96universal}, but it uses much more space
($u^{\Omega(1)}$ where $u$ is the size of the universe instead of
constant). Simple tabulation is only 3-independent, yet it does give
constant expected operation time for linear probing and $o(1)$-minwise
hashing. We also proposed a variant, twisted tabulation, with even
stronger probabilistic guarantees for both linear probing and minwise
hashing \cite{DT14:twist-min,PT13:twist}. Both of these tabulation schemes
are of a general nature with many applications even though they are 
only 3-independent.

We note that there has been several other studies of hashing schemes that for
other concrete applications have greater power than their independence
suggests, e.g., \cite{aumullerDW14:cuckoo-hash,dietzfel03tabhash}. 
The focus in this paper, however, is hashing for linear probing and minwise hashing. 

The problems discovered here for minwise hashing, also made the last
author consider an alternative to repeating minwise hashing $d$ times
independently; namely to store the $d$ smallest hash value with a
single hash function \cite{thorup13bottomk}. It turns out that as $d$
increases, this scheme performs well even with 2-independence.

\section{Linear probing with $k$-independence}

To better situate our lower bounds, we will first present some simple
proofs of the known upper bounds for linear probing from 
Table~\ref{tab:lin-probe}. This is the
$O(n\lg n)$ expected construction time with 2-independence, 
the $O(n)$ expected construction time with 4-independence,
the $O(\lg n)$ expected query time with 3-independence, and
the $O(1)$ expected query time with 5-independence. The last
bound is the main result from~\cite{pagh07linprobe}, and all the
other bounds are at least implicit in~\cite{pagh07linprobe}.
Our proof here is quite different from that in~\cite{pagh07linprobe}: 
simpler and more close in line
with our later lower bound constructions. Our proof is
also simplified in that we only consider load factors below $2/3$. A more
elaborate proof obtaining tight bounds for 5-independence for all load factors $1-\eps$ is presented in~\cite{patrascu12charhash}.

The main probabilistic tool featuring
in the upper bound analysis is standard moment bounds: consider throwing $n$ balls
into $b$ bins uniformly. Let $X_i$ be the indicator variable for the
event that ball $i$
lands in some fixed bin, and $X = \sum_{i=1}^n X_i$ the number of balls
in the this bin. We have $\mu = \E[X] = \frac{n}{b}$. As usual, the
$k$\th central moment of $X$ is defined as $\E[(X-\mu)^k]$. If $k=O(1)$ and $\mu=\Omega(1)$, then 
$\E[(X-\mu)^k]=O(\mu^{k/2})$. Therefore, by Markov's inequality, if $k$ is
further even, 
\begin{equation}\label{eq:moment}
\Pr[|X-\mu|\geq \alpha \mu]=\Pr[(X-\mu)^{k}\geq
  \alpha^k\mu^k]=O(1/(\alpha^k\mu^{k/2})).
\end{equation}
These $k$\th moment bounds were also used in~\cite{pagh07linprobe}, but
the way we apply them here is quite different. We consider a perfect
binary tree spanning the array $[t]$ where $t$ is a power of two. 
A node at 
height $h\leq \lg_2 t$ has an interval of $2^h$ array positions below it, and
is identified with this interval. 

We assume that the load factor is at most $2/3$, that is $n\leq
\frac23\,\tsize$, so we expect at most $\frac23\, 2^h$ keys to
hash to the interval of a height $h$ node (recall that
with linear probing, keys may end up in positions later than
the ones they hash to). Call the node ``near-full'' if at least $\frac34\,
2^h$ keys hash to its interval.

\paragraph{Construction time for $k=2,4$}
We will now bound the total expected time it takes to construct
the hash table (the cost of inserting $n$ distinct keys). 
A run is a maximal interval of filled locations. If the
table consists of runs of $\ell_1, \ell_2, \dots$ keys ($\sum \ell_i = n$),
the cost of constructing it is bounded from above by $O(\ell_1^2 + \ell_2^2
+ \dots)$. We note that runs of length $\ell_i<4$ contribute $O(n)$ to
this sum of squares. To bound the longer runs, we make the following crucial
observation: if a run contains between $2^{h+2}$ and $2^{h+3}$
keys for $h\geq 0$, then some node at height $h$ above it is near-full.  In
fact, there will be such a near-full height $h$ node whose last
position is in the run.

For a proof, we study a run of length at least $2^{h+2}$.  The run is
preceded by an empty position, so all keys in the run are hashed to
the run (but may appear later in the run than the position they hashed
to).  We now consider the first 4 height $h$ nodes with their
last position in the interval. The last 3 of these have all their positions
in the run.
Assume for a contradiction that none of
these are near-full. The first node (whose first positions may not be
in the run) contributes less than $\frac 34\, 2^h$ keys to the run
(in the most extreme case, this many keys hash to the last position of
that node).  The subsequent nodes have all $2^h$ positions in the
run, but with less than $\frac 34\,2^h$ keys hashing to these
positions. Even with the maximal excess from the first node, we cannot
fill the intervals of three subsequent nodes, so the run must stop
before the end of the third node, contradicting that its last position
was in the run.

Each node has its last position in at most one run, so the observation
gives an upper bound on the cost: for each height $h\geq 0$, add $O(2^{2(h+2)})=O(2^{2h})$ for each
near-full node at height $h$.  Denoting by $p(h)$ the
probability that a node on height $h$ is near-full, the expected
total cost over all heights is thus bounded by 
\[O\left(\sum_{h=0}^{\lg_2 t} (t/2^h)\cdot p(h) \cdot 2^{2h} \right)=
O\left(n\cdot \sum_{h=0}^{\lg_2 t} 2^h\cdot p(h)\right).\] 
Applying  \req{eq:moment} with $\mu= \frac 23\,2^h$
and $\alpha=\frac 34/\frac 23=\frac 98$, we get $p(h)=O(2^{-kh/2})$. With $k=2$,
we obtain $p(h) = O(2^{-h})$, so the total
expected construction cost with 2-independence is $O(n\lg n)$. However, the 4\th moment gives $p(h) =
O(2^{-2h})$, so the total expected construction cost with 4-independence is $O(n)$. These are the upper bounds on the expected construction time for Table~\ref{tab:lin-probe}.

\paragraph{Query time for $k=3,5$}
To bound the running time of one particular operation (query or insert
$q$), we first pick that hash value of $q$. Conditioned
on this choice, the hashing of the stored keys is $(k-1)$-dependent.
The analysis is now very similar to the one for the construction time referring
to the same binary tree.

Suppose the hash of $q$ is contained in a run of length $\ell$. Then
$O(\ell)$ bounds the query time. Assume $\ell\in[2^{h+2}, 2^{h+3})$ for $h\geq 0$.
Then as we argued above, one of the first 4 nodes of height $h$
whose last position is in the run is near-full. Since the run contains
the fixed hash of $q$ and is of length at most $2^{h+3}$, there are at
most $12$ relevant height $h$ nodes; namely the ancestor of the hash of $q$,
the $8$ nodes to its left and the $3$ nodes to its right. Each has probability
$p(h)$ of being near-full, so the expected run length is
\[\E[\ell]\leq 3+\sum_{h=0}^{\lg t}12\cdot p(h)\cdot 2^{h+3}=
O\left(\sum_{h=0}^{\lg t}p(h)\cdot 2^{h}\right).\]
This time, we use $k'=k-1$ in \req{eq:moment}, so with $3$-independence
we obtain $p(h) = O(2^{-h})$, and an expected query time
of $O(\lg n)$. With 5-independence, we get $p(h) =
O(2^{-2h})$, so the expected query time is $O(1)$.

\paragraph{Our results.}
Two intriguing questions pop out of the above analysis. First, is the
independence of the query really crucial? Perhaps one could argue that
the query behaves like an average operation, even if it is not
completely independent of everything else. Secondly, one has to wonder
whether 3-independence suffices (by using something other than 3\rd
moment): all that is needed is a bound slightly stronger than 2\nd
moment in order to make the costs with increasing heights decay geometrically!

We answer both questions in strong negative terms.  The complete
understanding we provide of linear probing with low independence is
summarized in Table~\ref{tab:lin-probe}.  Addressing the first
question, we show that there are 4-independent hash functions that for
certain combinations of query and stored keys lead to an expected
search time of $\Omega(\lg n)$ time. Our proof demonstrates an
important phenomenon: even though most bins have low load, a
particular query key's hash value could be correlated with the
(uniformly random) choice of \emph{which} bins have high load.

An even more striking illustration of this fact happens for
2-independence: the query time blows up to $\Omega(\sqrt{n})$ in
expectation, since we are left with no independence at all after
conditioning on the query's hash. A matching upper bound will also
be presented. This demonstrates a very large
separation between linear probing and collision chaining, which enjoys
$O(1)$ query times even for 2-independent hash functions.

Addressing the second question, we show that 3-independence is not
enough to guarantee even a construction time of $O(n)$. Thus, in some
sense, the 4\th moment analysis is the best one can hope for.

The constructions will be progressively more complicated as the
independence $k$ grows, and the constructions for higher $k$ will
assume a full understanding of the constructions for lower $k$.

\subsection{Expected Query Time $\Theta(\sqrt{n})$ with 2-Independence}\label{sec:2-ind}

Above we saw that the expected construction time with $2$-independence
is $O(n\lg n)$, so the average cost per key is $O(\lg n)$. We will
  now define a 2-independent hash function such that the expected query
  time for some concrete key is $\Omega(\sqrt{n})$. Afterwards,
we will show a matching upper bound of $O(\sqrt{n})$
that holds with any 2-independent
hash function. 

The main idea of the lower bound proof is that a designated query $q$
can play a special role: even if most portions of the hash table are
lightly loaded, the query can be correlated with the portions that
\emph{are} loaded. We assume that the number $n$ of stored keys is a
square and that the table size is $\tsize=2n$.

We think of the stored keys and the query key as given, and we
want to find bad ways of distributing them $2$-independently into the
range $[\tsize]$. To extend the hash function to the entire universe, all
other keys are hashed totally randomly. We consider unsuccessful
searches, i.e.~the search key $q$ is not stored in the hash table. The
query time for $q$ is the number of cells considered from $h(q)$ up to
the first empty cell. If, for some $d$, the interval $Q=(h(q)-d,h(q)]$
has $2d$ or more keys hashing into it, then the search time is $\Omega(d)$.

Let $d=2\sqrt n$, noting that $d$ divides $\tsize$.  In our
construction, we first pick the hash value $h(q)$ uniformly. We then divide
the range into $\sqrt{n}$ intervals of length $d$, of the form $(h(q)
+ i\cdot d,\, h(q) + (i+1)d]$, wrapping around modulo $\tsize$. One of these
intervals is exactly $Q$.

Below we prescribe the distribution of stored keys among the intervals. We will
only specify how many keys go in each interval. Otherwise,
the distribution is assumed to be fully random. Thus it is
understood that the keys are randomly permuted between the intervals and
that the keys in an interval are placed fully randomly within 
that interval.

To place $2d =4\sqrt{n}$ keys in the query interval with constant probability, we
mix two strategies, each followed with a constant probabilities to be
determined:
\begin{description}
\item[$S_1$:] Spread keys evenly, with $\sqrt{n}$ keys in each interval. 
\item[$S_2$:] Consider the query interval $Q$ and pick three random intervals, distinct from $Q$ and each other.
  Place $4\sqrt{n}$ keys in a random one of these 4 intervals, and none in the others.  
  All other intervals than these 4 get $\sqrt{n}$ keys.  

  With probability $1/4$, it is $Q$ that gets
  $4\sqrt{n}=2d$ keys, overloading it by a factor 2. Then, as described above,
  the search time is $\Omega(\sqrt{n})$.
\end{description}
To argue that the distribution is 2-independent with appropriate balancing
between $S_1$ and $S_2$, we need to consider
pairs of two stored keys, and pairs involving the query and one stored
key. 

Consider first the query key $q$ versus a stored key $x$. Given $h(q)$, we want to argue that $x$ is placed uniformly at random in $[t]$.
The key $x$ is
placed uniformly in whatever interval it lands in. With
$S_1$, the distribution among intervals is symmetric, so $x$
is indeed placed uniformly in $[t]$ with $S_1$.
Now consider $S_2$.  Since the three special non-query intervals with
$S_2$ are random, $x$ has the same chance of landing in any non-query
interval. All that remains is to argue that the probability that $x$
lands in the query interval $Q$ is $1/\sqrt n$. This follows because
the expected number of keys in $Q$
is $\sqrt n$ and the $n$ stored keys are treated symmetrically.
The hashing of the query and a stored key is thus independent both
with $S_1$ and $S_2$.

We now consider two stored keys. 
We will think of the
hash value $h(q)$ as being picked in two steps. First we pick
the offset $r(q)=h(q)\bmod d$ uniformly at random. This offset decides the
locations of our intervals as $(r(q)+(j-1)\cdot d,\,r(q)+j\cdot d]$,
for $j=0,\ldots,\sqrt n-1$, with wrap-around modulo $t$.  
Second with pick the uniformly random index $i(q)=\lfloor h(q)/d\rfloor$
of the query interval $Q=(r(q)+(i(q)-1)\cdot d,\,r(q)+i(q)\cdot d]$.

Now consider the strategy $S_2$ after the offset has been fixed. The
query interval is chosen uniformly at random, so from the perspective
of stored keys, the four special intervals with $S_2$ are
completely random. This means that from the perspective
of the stored keys, all intervals are symmetric both with $S_1$ and $S_2$.

All that remains is to understand the probability of the two keys
landing in the same interval. We call this a ``collision''. We need
to balance the strategies so that the collision probability is exactly
$1/\sqrt n$. Since all stored keys are treated symmetrically, this
is equivalent to saying that the expected number of
collisions among stored keys is $\binom{n}{2}/
\sqrt{n}= \frac{1}{2} n^{1.5} -
\frac{1}{2} \sqrt{n}$.

In strategy $S_1$, we get the smallest possible number of collisions:
$\sqrt{n} \binom{\sqrt{n}}{2} = \frac{1}{2} n^{1.5} - \frac{1}{2}
n$. This is too few by almost $n/2$. In strategy
$S_2$, we get $(\sqrt{n} - 4) \binom{\sqrt{n}}{2} +
\binom{4\sqrt{n}}{2} = \frac{1}{2} n^{1.5} + \frac{11}{2} n$
collisions, which is too much by a bit more than $5.5n$. To get the
right expected number of collisions, we use $S_1$ with probability
$P_{S_1}=\frac{5.5n+0.5\sqrt n}{0.5 n + 5.5n} = \frac{11}{12} +\frac1{12\sqrt n}$. With this mix of strategies, our hashing of keys is 2-independent,
and since $P_{S_2}=\Omega(1)$, the expected search cost is $\Omega(\sqrt n)$.

\paragraph{Upper bound}
We will now prove a matching upper bound of $O(\sqrt n)$ on the expected
query time $T$ with any 2-independent scheme. As an upper bound on the query
time, we consider the longest run length $L$ in the whole linear probing table.
Then $T=O(L)$ no matter which location the query key hash to. Therefore it
does not matter if the hash value of the query key depends on the
hashing of the stored keys.

The table size is $t=(1+\eps)n$, for some $\eps\in(0,1]$, and we assume for simplicity
that $n$ is a square and $\sqrt n$ divides $t$. We will prove
that $\E[L]=O(\sqrt n/\eps)$. As in the lower bound, we divide $[t]$ into 
$\sqrt n$ equal sized intervals. We view
keys as colliding if they hash to the same interval.  We want to argue
that a large run imply too many collisions for $2$-independence, but the argument
is not based on a standard 2\nd moment bound.

Let $C$ be the number of collisions. 
The expected number of collisions is $\E[C]={n \choose 2}/\sqrt n=n^{3/2}/2-
n^{1/2}/2$. The minimum number of collisions is with the distribution
$S_1$ from the lower bound: a perfectly regular distribution with
$n/\sqrt n=\sqrt n$ keys in each interval, hence $C\geq 
\sqrt n\cdot {\sqrt
  n \choose 2}=n^{3/2}/2-n/2$ collisions.

An interval with $m$ keys has $m^2/2-m/2$ collisions and the
derivative is $m-1/2$. It follows that if we move a key from an
interval with $m_1$ keys to one with $m_2\geq m_1$ keys, the number of
collisions increases by more than $m_2-m_1$. Any distribution can be
obtained from the above minimal distribution by moving keys from
intervals with at most $\sqrt n$ keys to intervals with at least
$\sqrt n$ keys, and each such move increases the number of collisions.

A run of length $L$ implies that this many keys hash to an interval of
this length. The run is contained in less than $L/((1+\eps)\sqrt n)+2$ of our
length $t/\sqrt n=(1+\eps)\sqrt n$ intervals.  In the process of creating a
distribution with this run from the minimum distribution, we have to
move at least $L-(\eps\sqrt n/2)(L/((1+\eps)\sqrt n)+2)$ keys to intervals that
already have $\eps\sqrt n/2$ keys added, and each such move
gains at least
$\eps\sqrt n/2$ collisions. Thus our total gain in collisions is at least
\begin{align*}
(\eps\sqrt n/2)(L-(\eps\sqrt n/2)(L/((1+\eps)\sqrt n)+2))&= 
(1-\eps/(2(1+\eps))) \eps L\sqrt n/2-\eps^2 n/2\\
&\geq 3\eps L\sqrt n/8-\eps^2 n/2.
\end{align*}
The total number of collisions $C$ with a run of length $L$ is
therefore at least
\[C_L=n^{3/2}/2-n/2+3\eps L\sqrt n/8-\eps^2n/2\geq n^{3/2}/2-n+3\eps^2L\sqrt n/8.\]
Since $C_L$ is linear in $L$, the expected number of collisions is thus lower bounded by
\[\E[C]\geq\E[C_L]=n^{3/2}/2-n+3\eps\E[L]\sqrt n/8.\]
But $\E[C]=n^{3/2}/2- n^{1/2}/2$, so we conclude that
\[n^{3/2}/2- n^{1/2}/2\geq n^{3/2}/2-n+3\eps\E[L]\sqrt n/8
\implies \E[L]\leq 8(n- n^{1/2})/(3\eps \sqrt n)<3\sqrt n/\eps.\]
The expected maximal run length is thus less than $3\sqrt n/\eps$, so
the expected query time is $O(\sqrt n/\eps)$. Summing up, we have proved
\begin{theorem}\label{thm:2-ind-query} If $n$ keys are stored in 
a linear probing table of size $t=(1+\eps)n$ using a 2-independent scheme, 
then the expected query time for any key is $O(\sqrt n/\eps)$. Moreover,
for any set of $n$ given keys plus a distinct query key,
there exists a 2-independent scheme such that if it
is used to insert the  $n$ keys in a linear probing table of size 
$t=2n$, then the query takes $\Omega(\sqrt n)$ time.
\end{theorem}

\subsection{Construction Time $\Omega(n\lg n)$ with 3-Independence} 
   \label{sec:3wise}

We will now construct a 3-independent hash function, such
that the time to insert $n$ keys into a hash table is $\Omega(n \lg
n)$. The lower bound is based on overflowing
intervals.
\begin{lemma}\label{lem:overflow-constr}
Suppose an interval $[a,b]$ of length $d$ has $d+\Delta$ stored keys hashing
to it. Then the insertion cost of these keys is
$\Omega(\Delta^2)$. 
\end{lemma}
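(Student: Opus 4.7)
The plan is to show that the excess of $\Delta$ keys beyond the capacity of $[a,b]$ forces a total displacement of $\Omega(\Delta^2)$, which directly lower-bounds the insertion cost. First, since linear probing places every key at a cell no earlier than its hash value, all $d+\Delta$ keys that hash into $[a,b]$ end up at distinct cells at positions $\geq a$. Because only $d$ positions lie within $[a,b]$, at least $\Delta$ of these keys occupy cells strictly past $b$; call these the overflow keys.

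Next, I would sort the overflow positions as $q_1 < q_2 < \cdots < q_\Delta$. By distinctness and integrality, $q_i \geq b+i$. Each overflow key has hash value at most $b$, so the $i$-th overflow key's displacement satisfies $q_i - h(x) \geq i$. Summing over $i = 1,\ldots,\Delta$ yields total displacement at least $\Delta(\Delta+1)/2 = \Omega(\Delta^2)$ for the overflow keys alone. Since linear probing (without deletions) places a key in its final cell at insertion time and never moves it thereafter, the insertion cost of any key equals its displacement plus one. Hence the sum of insertion costs of just the overflow keys is already $\Omega(\Delta^2)$, which lower-bounds the total insertion cost of the $d+\Delta$ keys.

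The main point to be careful about is the effect of interference: stored keys hashing outside $[a,b]$ can occupy cells in the relevant region and push our $d+\Delta$ keys even further forward. But such interference only \emph{increases} positions and hence displacements, so the distinctness of positions and the bound $q_i \geq b+i$ remain valid lower bounds. The argument is also insensitive to the order of insertions, as each key's final position is also its position at the time it is inserted. The combinatorial content is elementary; verifying that these robustness claims hold throughout is the only delicate point.
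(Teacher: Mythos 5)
Your proof is correct and follows essentially the same route as the paper: identify the $\Delta$ overflow keys that must land strictly past $b$, observe that each has hash at most $b$ so its displacement is at least its rank among the overflowers, and sum to get $\Omega(\Delta^2)$; you also make the same observation that outside interference only increases displacements. The paper phrases the counting slightly differently (at least $\lceil\Delta/2\rceil$ keys each displaced by at least $\lceil\Delta/2\rceil$), but it is the same argument.
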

\begin{proof}
The overflowing $\Delta$ keys will be part of a run containing
$(b,b+\Delta]$. At least $\lceil
\Delta/2\rceil$ of them must end at position $b+\lceil
\Delta/2\rceil$ or later, i.e., a displacement of at least $\lceil
\Delta/2\rceil$. Interference from stored keys hashing outside $[a,b]$ can
only increase the displacement, so the insertion cost is $\Omega(\Delta^2)$.
\end{proof}
We will add up such squared
overflow costs over disjoint intervals, demonstrating an expected total cost
of $\Omega(n\lg n)$.

As before, we assume the array size $\tsize=2^p$ is a power of two,
and we set $n = \lceil \frac{2}{3} t \rceil$. We imagine a perfect binary 
tree of height $p$ spanning $[t]$: The root is level $0$
and level $\ell$ is the nodes at depth $\ell$. The $2^p$ leaves on 
level $p$ are identified with $[t]$.

Our hash function
will recursively distribute keys from a node to its two children,
starting at the root. Nodes run independent random distribution
processes. Then, if each node makes a $k$-independent distribution,
overall the function is $k$-independent.

For a node, we mix between two strategies for distributing $2m$ keys between
the two children (here $m$ may only be half-integral):
\begin{description}
\item[$S_1$:] Distribute the keys evenly between the children. If $2m$ is
odd, a random child gets $\lceil m\rceil$ keys. The keys
are randomly permuted, so it is random which keys ends in which interval.
\item[$S_2$:] Give all the keys to a random child.
\end{description}
Our goal is to prove that there is a probability for the second
strategy, $P_{S_2}$, such that the distribution process is
3-independent. Then we will calculate the cost it induces on linear
probing. First, however, we need some basic facts about $k$-independence.

\subsubsection{Characterizing $k$-Independence}

Our distribution procedure treats keys symmetrically, and ignores the
distinction between left/right children. We call such distributions
\emph{fully symmetric}. As above, we consider a node that
has to distribute $2m$
keys to its two children. The key set is identified with $[2m]$.
Let $X_a$ be the
indicator random variable for key $a$ ending in the left child, and
$X=\sum_{a\in[2m]} X_a$. By symmetry of the children, $\E[X_a] = \frac{1}{2}$,
so $\E[X] = m$.  The $k$\th moment is $F_k = \E[(X-m)^k]$. Also define
$p_k = \Pr[X_1 = \dots = X_k = 1]$. Note here by symmetry that 
any $k$ distinct keys yield the same value. Also, by symmetry, $p_1=1/2$.

\begin{lemma} \label{lem:pk}
A fully symmetric distribution is $k$-independent iff $p_i = 2^{-i}$
for all $i=2, \dots, k$.
\end{lemma}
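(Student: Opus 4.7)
The plan is to prove the two directions separately. The $(\Rightarrow)$ direction is immediate: if the distribution is $k$-independent, then any $i \le k$ distinct indicators $X_1,\dots,X_i$ are independent, and full symmetry between children forces $\Pr[X_j=1]=1/2$, so $p_i = \prod_{j=1}^{i}\Pr[X_j=1] = 2^{-i}$. Note in particular that the condition $p_1=1/2$ is automatic from the child-swap part of ``fully symmetric.''

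For the $(\Leftarrow)$ direction, I would assume $p_i = 2^{-i}$ for $i=2,\dots,k$ (and implicitly $p_1=1/2$ by symmetry) and show that for any $j\le k$ indices and any prescribed bit pattern, the joint probability equals $2^{-j}$. By key-permutation symmetry, the joint law of any $j$ of the $X_i$'s depends only on the multiset of bits in the pattern, so it suffices to compute
\[ \Pr\bigl[X_1=\cdots=X_s=1,\ X_{s+1}=\cdots=X_j=0\bigr] \]
and show it equals $2^{-j}$ for every $0\le s\le j\le k$.

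The workhorse is inclusion-exclusion applied to the ``bad'' events $\{X_i=1\}$ for $i=s+1,\dots,j$, conditioned on $X_1=\cdots=X_s=1$. By key symmetry, intersecting the fixed event $\{X_1=\cdots=X_s=1\}$ with any $r$ of these bad events gives probability $p_{s+r}$, so
\[ \Pr\bigl[X_1=\cdots=X_s=1,\ X_{s+1}=\cdots=X_j=0\bigr] = \sum_{r=0}^{j-s}\binom{j-s}{r}(-1)^r p_{s+r}. \]
Substituting $p_{s+r}=2^{-(s+r)}$ and collapsing by the binomial theorem yields $2^{-s}(1-\tfrac{1}{2})^{j-s}=2^{-j}$, as desired. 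This gives uniformity of the joint distribution on $\{0,1\}^j$ for every $j\le k$, which is exactly $k$-independence of the $0/1$ variables $X_i$.

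The only subtlety I expect is bookkeeping: making sure I invoke the right symmetry at the right moment (key symmetry to reduce an arbitrary pattern to the canonical ``first $s$ ones, last $j-s$ zeros'' form, and child symmetry to get $p_1=1/2$ for free), and handling the case $j<k$ correctly by noting that the hypothesis $p_i=2^{-i}$ is assumed for \emph{all} $2\le i\le k$, so the inclusion-exclusion sum above uses only valid substitutions. No real calculation is needed beyond one application of the binomial theorem.
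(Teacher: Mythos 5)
Your proposal is correct and is, in essence, the paper's own argument presented in closed form. Both proofs first use key-permutation symmetry to reduce an arbitrary bit pattern to a canonical ``$s$ ones then $j-s$ zeros'' pattern, and both then express that probability in terms of the all-ones probabilities $p_0,\dots,p_j$. The paper does this via a two-parameter downward recursion $p_{k,t}=p_{k-1,t}-p_{k,t+1}$ (with $p_{k,k}=p_k$ as the anchor and $p_{1,0}=p_{1,1}=\tfrac12$ as the base), while you unroll that same recursion into the explicit inclusion--exclusion identity
\[
\Pr[X_1=\cdots=X_s=1,\ X_{s+1}=\cdots=X_j=0]=\sum_{r=0}^{j-s}\binom{j-s}{r}(-1)^r p_{s+r},
\]
then collapse it with the binomial theorem. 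The recursion and the closed form are two ways of writing the same subtraction; yours is perhaps slightly more self-contained (no nested induction to keep track of), while the paper's is shorter to state. One point you handled correctly that's easy to miss: the sum only invokes $p_{s+r}$ with $s+r\le j\le k$, together with $p_0=1$ and $p_1=\tfrac12$ (the latter from child symmetry, not from the hypothesis), so every substitution is licensed. No gaps.
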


\begin{proof}
For the non-trivial direction, assume $p_i = 2^{-i}$ for all $i=2, \dots, k$.
We need to show that, for any $(x_1, \dots, x_k) \in \{0,1\}^k$,
$\Pr[(X_1=x_1) \land \dots \land (X_k=x_k)] = 2^{-k}$. By symmetry of
the keys, we can sort the vector to $x_1 = \dots = x_t = 1$ and
$x_{t+1} = \dots = x_k = 0$. Let $p_{k,t}$ be the probability that
such a vector is seen.

We use induction on $k$. In the base case, $p_{1,0} = p_{1,1} =
\frac{1}{2}$ by symmetry. For $k\ge 2$, we start with $p_{k,k} = p_k =
2^{-k}$. We then use induction for $t=k-1$ down to $t=0$.  The
induction step is simple: $p_{k,t} = p_{k-1,t} - p_{k,t+1} =
2^{-(k-1)} - 2^{-k} = 2^{-k}$. Indeed, $\Pr[X_1=\cdots=X_t = 1 \land
  X_{t+1}=\cdots= X_k = 0]$ can be computed as the difference between
$\Pr[X_1=\cdots= X_t = 1 \land X_{t+1} =\cdots= X_{k-1} = 0]$ (measured by
$p_{k-1, t}$) and $\Pr[X_1=\cdots= X_t = 1 \land X_{t+1} =\cdots= X_{k-1}
  = 0 \land X_k=1]$ (measured by $p_{k,t+1}$).
\end{proof}

Based on this lemma, we can also give a characterization based on
moments. First observe that any odd moment is necessarily zero, as
$\Pr[X=m+\delta] = \Pr[X=m-\delta]$ by symmetry of the children.

\begin{lemma}\label{lem:Fk}
A fully symmetric distribution is $k$-independent iff its even moments
up to $F_k$ coincide with the moments of the truly random
distribution.
\end{lemma}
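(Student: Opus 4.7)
My plan is to reduce to Lemma~\ref{lem:pk} via a triangular change of basis between the central moments $F_i$ and the collision probabilities $p_i$. I would work with the symmetric random variable $X\in\{0,1,\ldots,2m\}$ giving the number of keys sent to the left child; by full symmetry of the keys, $\E[(X)_j] = (2m)_j\, p_j$, where $(X)_j = X(X-1)\cdots(X-j+1)$ denotes the falling factorial.

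First I would expand $(X-m)^i$ in the falling-factorial basis as $(X-m)^i = \sum_{j=0}^{i} c_{ij}\,(X)_j$, with $c_{ii}=1$ and the remaining $c_{ij}$ depending only on $i,j,m$. Taking expectations yields
\[
F_i \;=\; (2m)_i\, p_i \;+\; \sum_{j<i} c_{ij}\,(2m)_j\, p_j,
\]
which is an invertible upper-triangular linear system relating $(p_0,\ldots,p_k)$ to $(F_0,\ldots,F_k)$. The very same identities hold for the truly random distribution, so the tuple $(F_0,\ldots,F_k)$ agrees with its truly random counterpart if and only if $(p_0,\ldots,p_k)$ does, which in turn is equivalent to $p_i=2^{-i}$ for every $i\le k$.

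Second, full symmetry forces $X \stackrel{d}{=} 2m-X$, so every odd central moment $F_{2j+1}$ vanishes identically, both for the given distribution and for the truly random one. Hence the odd moments trivially agree, and matching the even moments up to $F_k$ is equivalent to matching the entire moment tuple $F_0,\ldots,F_k$. Combined with the previous step this is equivalent to $p_i=2^{-i}$ for every $i\le k$, which by Lemma~\ref{lem:pk} is precisely $k$-independence. I anticipate no serious obstacle: the heart of the argument is the routine fact that central and factorial moments are triangularly related, and the elimination of the odd moments is immediate from the symmetry assumption, which is baked into the hypothesis.
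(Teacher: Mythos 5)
Your proof is correct and takes essentially the same route as the paper: both establish a triangular, invertible relation between the central moments $F_i$ and the collision probabilities $p_i$ (the paper writes $F_k = (2m)^{\overline k}p_k + f_k(m,p_2,\dots,p_{k-1})$ via $\E[X^k]$; you get the same identity more cleanly by expanding $(X-m)^i$ directly in the falling-factorial basis), and then invoke Lemma~\ref{lem:pk}. The only cosmetic difference is that you re-derive the vanishing of odd moments inside the proof, whereas the paper notes it just before the lemma statement.
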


\begin{proof}
We will show that $p_2, \dots, p_k$ are determined by $F_2, \dots,
F_k$, and vice versa. Thus, any distribution that has the same moments
as a truly random distribution, will have the same values $p_2, \dots,
p_k$ as the truly random distribution ($p_i = 2^{-i}$ as in
Lemma~\ref{lem:pk}).

Let $n^{\underline{k}} = n (n-1) \dots (n-k+1)$ be the falling
factorial. The complete dependence between $p_2, \dots, p_k$ and $F_2, \dots,
F_k$ follows inductively from the following statement:
\begin{equation}\label{eq:Fk}
F_k = (2m)^{\underline k}p_k  ~+~ f_k(m, p_2,..,p_{k-1}),\qquad
    \textrm{for some function $f_k$.}
\end{equation}
To see this, first note that 
\begin{equation}\label{eq:Fk1}
F_k = \E[(X-m)^k] = \sum_{j=0}^k {k\choose j} \E[X^j](-m)^{k-j}=
\E[X^k] + g_k\big( m,
\E[X^2], \dots, \E[X^{k-1}] \big)
\end{equation}
for some function $g_k$. Moreover,  
\begin{equation}\label{eq:Fk2}
\E[X^k] = \sum_{(a_1,\ldots,a_k)\in [2m]^k}\E[X_{a_1}\cdots X_{a_k}]
= d_0(m,k) p_k+d_1(m,k) p_{k-1}+ \cdots +d_{k-1}(m,k)p_1\textnormal, 
\end{equation}
where $d_i(m,k)$ is the number of tuples $(a_1,\ldots,a_k)\in [2m]^k$
with $i$ duplicates, hence $k-i$ distinct keys. 
In particular, $d_0(m,k)=(2m)^{\underline k}$, and by symmetry, we always
have $p_1=1/2$. Combining this with \req{eq:Fk1} and \req{eq:Fk2}, we get that 
\[F_k = (2m)^{\underline k} p_k+g_k^*\big( m,p_2,..,p_{k-1},
\E[X^2], \dots, \E[X^{k-1}] \big)\]
for some function $g^*_k$, and then \req{eq:Fk} follows by
induction.
\end{proof}

\subsubsection{Mixing the Strategies} 
As a general convention, when we are mixing strategies $S_i$, we
use $P_{S_i}$ to denote the probability of picking strategy $S_i$ while we
use a superscript ${S_i}$ to denote measures within strategy $S_i$, e.g.,
$F_2^{S_i}$ is the second moment when strategy $S_i$ is applied.

Our strategies $S_1$ and $S_2$ are both fully symmetric, so by 
Lemma \ref{lem:Fk}, a mix of $S_1$ and $S_2$ is
3-independent iff it has the correct 2\nd moment $F_2 =
\frac{m}{2}$. In strategy $S_1$, $X = m \pm 1$ (due to rounding errors
if $2m$ is odd), so $F_2^{S_1} \le 1$. In $S_2$ (all to one child),
$|X-m|=m$ so $F_2^{S_2}=m^2$. For a correct 2\nd moment of $m/2$, we balance
with $P_{S_2} =\frac{1}{2m} \pm O(\frac{1}{m^2})$.

\subsubsection{The Construction Cost of Linear Probing}
We now calculate the cost in terms of squared overflows. As long as
the recursive steps spread the keys evenly with $S_1$, the load factor
stays around $2/3$: at level $\ell$, the intervals have length
$t/2^\ell$ and $2m=2/3\cdot t/2^\ell\pm 1$ keys to be split between
child intervals of length $t/2^{\ell+1}$.  If now, for a node
$v$ on level $\ell$, we apply strategy $S_2$ collecting all keys into
one child, that child interval gets an overflow of $1/3\cdot
n/2^\ell\pm 1=\Omega(m)$ keys.  By Lemma \ref{lem:overflow-constr},
the keys at the child will have a total insertion cost of
$\Omega(m^2)$. Since $P_{S_2} = \Theta(1/m)$, the expected cost
induced by $v$ is $\Omega(m)=\Omega(n/2^i)$.

The above situation is the only one in which we will charge keys at
a node $v$, that is, the keys at $v$ are only charged if the $S_2$
collection is applied to $v$ but to no ancestors of $v$. This
implies that the same key cannot be charged at different nodes.
In fact, we will only charges nodes $v$ at the top $(\lg n)/2$ levels
where the chance that the $S_2$ collection has been done higher
up is small.

It remains to bound the probability that the $S_2$ collection has been
applied to an ancestor of a node $v$ on a given level $\ell\leq (\lg
n)/2$. The collection probability for a node $u$ on level $i\leq\ell$
is $P_{S_2}=\Theta(1/m)=\Theta(2^i/n)$ assuming no collection among
the ancestors of $u$. By the union bound, the probability that any
ancestor $u$ of $v$ is first to be collected is $\sum_{i=0}^{\ell-1}
\Theta(2^i/n)=\Theta(2^\ell/n)=\Theta(1/\sqrt n)=o(1)$.  We conclude
that $v$ has no collected ancestors with probability $1-o(1)$, hence
that the expected cost of $v$ is $\Omega(n/2^\ell)$ as above. The
total expected cost over all $2^\ell$ level $\ell$ nodes is thus
$\Omega(n)$. Summing over all levels $\ell\leq (\lg n)/2$, we get an
expected total insertion cost of $\Omega(n\lg n)$ for our
3-independent scheme. Thus we have proved
\begin{theorem}\label{thm:2-ind-insert} 
For any set of $n$ given keys,
there exists a 3-independent hashing scheme such that if it
is used to insert the  $n$ keys in a linear probing table of size 
$t=2n$, then the expected total insertion time is $\Omega(n\log n)$.
\end{theorem}

\subsection{Expected Query Time $\Omega(\lg n)$ with 4-Independence}
   \label{sec:4wise}

Proving high expected search cost with
4-independence combines the ideas for 2-independence and
3-independence. However, some quite severe complications will arise. 
The lower bound is based on overflowing
intervals.
\begin{lemma}\label{lem:overflow-query}
Suppose an interval $[a,b]$ of length $d$ has $d+\Delta$, 
$\Delta=\Omega(d)$, stored keys hashing
to it. Assuming that the interval has even length and that
the stored keys hash symmetrically to the first
and second half of $[a,b]$. Moreover, assume that the query key hashes uniformly
in $[a,b]$. Then the expected query time is $\Omega(\Delta)$.
\end{lemma}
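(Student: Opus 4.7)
The plan is to mimic the overflow-to-displacement argument of Lemma~\ref{lem:overflow-constr}, but to extract an $\Omega(\Delta)$ scanning distance for the query rather than an $\Omega(\Delta^2)$ total displacement. The extra symmetry and uniform-hash hypotheses are exactly what let us trap the query inside a long run forced by concentrated overflow.

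First I would exploit the symmetry hypothesis to concentrate the overflow in one half. Let $N_L$ and $N_R = d+\Delta - N_L$ be the numbers of stored keys hashing into $[a, a+d/2-1]$ and $[a+d/2, b]$ respectively. Symmetry of the joint distribution under swapping the halves makes $N_L$ and $N_R$ exchangeable, so $\Pr[N_L \geq (d+\Delta)/2] \geq 1/2$. Under this event, $d/2 + \Delta/2$ keys hash to a half of length $d/2$, and linear probing must then fill every cell in $[a+d/2, a+d/2+\Delta/2 - 1]$: the first empty slot past $a+d/2-1$ has to leave room for all $\Delta/2$ overflow keys. These cells therefore belong to one run.

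Next I would use the independent uniform hash of the query in $[a,b]$: $\Pr[h(q) \in [a+d/2, a+d/2+\Delta/4 - 1]] = \Delta/(4d) = \Omega(1)$ by the assumption $\Delta = \Omega(d)$. On the joint event that the overflow concentrates in the left half and the query lands in this window, the unsuccessful search starts inside the forced run and must scan at least $\Delta/4$ filled cells before reaching an empty one, so it costs $\Omega(\Delta)$. Multiplying the two constant probabilities by this $\Omega(\Delta)$ cost gives the claimed expected query time.

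The main subtlety is the ``fills every cell'' step: I need the filled block $[a+d/2, a+d/2+\Delta/2-1]$ not to be broken up by an early empty slot. This is immediate because interference from keys hashing outside $[a,b]$ can only extend the run, never interrupt it. A second point requiring care is that we are combining an event about the stored keys' hashes with an event about the query's hash; this is legitimate under the intended reading that $h(q)$ is independent of the stored-key hashes, as is the case in the 4-independent construction this lemma is designed to analyze.
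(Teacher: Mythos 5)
Your proof is correct and takes essentially the same route as the paper's: exchangeability of the two halves gives overflow $\Delta/2$ into one half with probability at least $1/2$, the pigeonhole/counting argument forces the cells $[a+d/2,\,a+d/2+\Delta/2)$ to be occupied, and the query's independent uniform hash lands in a constant fraction of that run with probability $\Omega(1)$ because $\Delta=\Omega(d)$. You simply spell out the exchangeability step, the fill argument, and the independence of $h(q)$ more explicitly than the paper's terse version does (the only tiny imprecision is writing $\Pr[h(q)\in[a+d/2,a+d/2+\Delta/4)]=\Delta/(4d)$ without capping at $1$ when $\Delta>2d$, but the $\Omega(1)$ conclusion survives).
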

\begin{proof}
By symmetry between the first and the second half, with probability $1/2$, the first half gets half the keys, hence an overflow of
$\Delta/2$ keys, and a run containing $[a+d/2,a+d/2+\Delta/2)$. Since
$\Delta=\Omega(d)$, the probability that the query key hits the first
half of this run is $\Omega(1)$, and then the expected query cost is $\Omega(\Delta)$.
\end{proof}
As for
2-independence, we will first choose $h(q)$ and then make the stored
keys cluster preferentially around $h(q)$. As for 3-independence, the
distribution will be described using a perfectly balanced binary tree
over $[t]$. The basic idea is to use the 3-independent
distribution from Section~\ref{sec:3wise} along the query path. For brevity, we call nodes on the
query path query nodes. The overflows that lead to an
$\Omega(n\lg n)$ construction cost, will yield an $\Omega(\lg n)$
expected query time. However, the clustering of our 3-independent distribution is far too strong for
4-independence, and therefore we cannot apply it in the
top of the tree. However, further down, we can balance clustering
on the query path by some anti-clustering 
distributions outside the query path.

\subsubsection{3-independent Building Blocks}
For a node that has $2m$
keys to distribute, we consider three basic strategies:
\begin{description}
\item[$S_1$:] Distribute the keys evenly between the two children. If $2m$ is
odd, a random child gets $\lceil m\rceil$ keys.

\item[$S_2$:] Give all the keys to a random child.

\item[$S_3$:] Pick a child randomly, and give it $m + \delta = \lceil m +
  \sqrt{m/2} \rceil$ keys.
\end{description}

By mixing among these, we define two super-strategies:
\begin{description}
\item[$T_1=$] $P_{S_2} \times S_2 ~+~ (1-P_{S_2}) \times S_1$;
\item[$T_2=$] $P_{S_3} \times S_3 ~+~ (1 - P_{S_3}) \times S_1$.
\end{description}
The above notation means that strategy $T_1$ picks strategy $S_2$ with
probability $P_{S_2}$; $S_1$ otherwise. Likewise $T_2$ picks 
$S_3$ with probability $P_{S_3}$; $S_1$ otherwise.
The probabilities $P_{S_2}$ and $P_{S_3}$ are chosen such that $T_1$
and $T_2$ are 3-independent. The strategy $T_1$ is the
3-independent strategy from Section~\ref{sec:3wise} where we
determined $P_{S_2} = \frac{2}{m} \pm O(\frac{1}{m^2})$. This
will be our preferred strategy on the query path.

To compute $P_{S_3}$, we employ the 2\nd moments: $F_2^{S_1} \le 1$ and
$F_2^{S_3}= \frac{m}{2} + O(\sqrt{m})$. (If one
ignored rounding, we would have the precise bounds $F_2^{S_1} = 0$ and
$F_2^{S_3} = \frac{m}{2}$.) By Lemma~\ref{lem:Fk}, we need a 2\nd
moment of $m/2$. Thus, we have $P_{S_3} = 1 - O(\frac{1}{\sqrt{m}})$.

\subsubsection{4-Independence on the Average, One Level At The Time}
We are going to get 4-independence by an appropriate mix of our
3-independent strategies $T_1$ and $T_2$. Our first step is to hash the
query uniformly into $[t]$. This defines the query path. We will do the mixing
top-down, one level $\ell$ at the time. The individual node will not
distribute its keys 4-independently.  Nodes on the
query path will prefer $T_1$ while keys outside the query path will
prefer $T_2$, all in a mix that leads to global 4-independence. 
There will also be neutral nodes for which we use a truly random distribution.
Since all distributions are 3-independent regardless of the query path,
the query hashes independently of any 3 stored keys. We are therefore only concerned about the 4-independence among stored keys.

It is tempting to try balancing of $T_1$ and $T_2$ via 4\th
moments using Lemma~\ref{lem:Fk}. However, even on the same level
$\ell$, the distribution of the number of keys at the node on the
query path will be different from the distributions outside the query
path, and this makes balancing via 4\th moments non-obvious. Instead, we will argue independence via
Lemma~\ref{lem:pk}: since we already have 3-independence and all
distributions are symmetric, we only need to show $p_4 = 2^{-4}$.
Thus, conditioned on 4 given keys $a,b,c,d$ being together on level
$\ell$, we want them all to go to the left child with probability
$2^{-4}$. By symmetry, our 4-tuple $(a,b,c,d)$ is uniformly random
among all 4-tuples surviving together on level $\ell$. On the average
we thus want such 4-tuples to go left together with probability
$2^{-4}$.

\subsubsection{Analyzing $T_1$ and $T_2$}
Our aim now is to compute $p_4^{T_1}$ and
$p_4^{T_2}$ for a node with $2m$ keys to be split between its children.

First we note:
\[ p_4^{S_1} ~=~ m^{\underline 4} / (2m)^4 ~=~ \tfrac{1}{2^4} \big( 1 -
  \tfrac{6}{m} \pm \tfrac{O(1)}{m^2} \big) \]
Indeed, the first key will go to the left child with probability
$\frac{1}{2} = \frac{m}{2m}$. Conditioned on this, the second key
will go to the left child with probability $\frac{m-1}{2m}$, etc.  In
$S_2$, all keys go to the left child with probability a half, so
$p_4^{S_2} = \frac{1}{2}$. Since $P_{S_2} =
\frac{2}{m} \pm O(\frac{1}{m^2})$, we get
\[ p_4^{T_1} ~=~ P_{S_2} \cdot p_4^{S_2} + (1-P_{S_2}) p_4^{S_1}
~=~ \tfrac{1}{2^4} \big( 1 + \tfrac{8}{m} \pm \tfrac{O(1)}{m^2} \big) 
~=~ 2^{-4} +\Theta(1/m). \]
To avoid a rather involved calculation, we will not derive $p_4^{T_2}$
directly, but rather as a function of the 4\th moment. We have $F_4^{S_1} \le
1$, $F_4^{S_3} = \delta^4 = \frac{1}{4} m^2 + O(m^{3/2})$, and $P_{S_3} = 1 - O(\frac{1}{\sqrt{m}})$, so
\[F_4^{T_2} = P_{S_3}F_4^{S_3}+(1-P_{S_3})F_4^{S_1}= \frac{1}{4} m^2 \pm O(m^{3/2}).\]
From the proof of Lemma~\ref{lem:Fk}, we know that $F_4 = (2m)^{\underline 4} p_4 +
f_k(m, p_2, p_3)$ with any distribution. Since $T_2$ is 3-independent, it has the same $p_2$ and $p_3$ as a truly random distribution. Thus, we can compute
$f(m,p_2, p_3)$ using the $p_4$ and $F_4$ values of a truly random
distribution. The 4\th moment of a truly random distribution is:
\[ F_4 = \frac{2m}{2^4} + \binom{4}{2} \frac{(2m)^{\underline 2}}{2^4} =
\frac{24m^2 - 10m}{2^4}. \]
Since $p_4 = 2^{-4}$ in the truly random case, we have: $f(m,p_2, p_3)
= 2^{-4} \big[ (2m)^{\underline 4}  - (24 m^2 - 10 m) \big]$. Now we can return
to $p_4^{T_2}$:
\begin{eqnarray*}
p_4^{T_2} &=& \frac{F_4^{T_2}  + f(m, p_2, p_3)}{(2m)^{\underline 4}}
  = \frac{1}{2^4} \left( \frac{4m^2 \pm O(m^{3/2})}{(2m)^{\underline 4}}
      + 1 - \frac{24 m^2 - 10 m}{(2m)^{\underline 4}} \right)
\\
&=& \frac{1}{2^4} \left( 1 - \frac{20m^2 \pm O(m^{1.5})}{(2m)^{\underline 4}} \right)
 = \frac{1}{2^4} \left( 1 - \frac{20}{m^2} \pm \frac{O(1)}{m^{2.5}} \right)
~=~ 2^{-4} -\Theta(1/m^2).
\end{eqnarray*}
To get $p_4=2^{-4}$ for a given node, we use a strategy $T^*$ that on
the average over a level applies $T_1$ with some probability
$P_{T_1}^*=\Theta(1/m)$; $T_2$ otherwise. However, as stated earlier,
we will often give preference to $T_1$ on the query path, and to $T_2$
elsewhere.

\subsubsection{The Distribution Tree}\label{sec:tree}
We are now ready to describe the mix of strategies used in the binary tree.
On the top $\frac{2}{3} \lg_2 \tsize$ levels, we use the above mentioned 
mix $T^*$ of $T_1$ and $T_2$ yielding a perfect 4-independent distribution 
of the keys at each node.

On the next levels $\ell\geq \frac{2}{3} \lg_2 \tsize$, we will
always use $T_1$ on the query path. For the other nodes, we use $T_1$
with the probability $P_{T_1}^-$ such that if all non-query nodes on
level $\ell$ use the strategy
\begin{description}
\item[$T^-=$] $P_{T_1}^- \times T_1 ~+~ (1-P_{T_1}^-) \times T_2$;
\end{description}
then we get $p_4=2^{-4}$ for an average 4-tuple on level $\ell$. We
note that $P_{T_1}^-$ depends completely on the distribution of 4-tuples at the nodes on level $\ell$ and that $P_{T_1}^-$ has to compensate for the fact
that $T_1$ is used at the query node. We shall prove the existence of
$P_{T_1}^-$ shortly.

Finally, we have a stopping criteria: if at some level
$\ell$, we use the $S_2$ collection on the query path, or if
$\ell+1>\frac{5}{6} \lg_2 \tsize$, then we use a truly random
distribution on all subsequent levels.  We note that the $S_2$
collection could happen already on a top level $\ell\leq \frac23
\lg_2 \tsize$.

\subsubsection{Possibility of Balance}\label{sec:balance}
Consider a level $\ell$ before the stopping criteria has been applied.
We need to argue that the above mentioned probability $P_{T_1}^-$ exists. 
We
will argue that $P_{T_1}^-=0$ implies $p_4<2^{-4}$ while $P_{T_1}^-=1$
implies $p_4>2^{-4}$. Then continuity implies that there exists a 
$P_{T_1}^-\in [0,1]$ yielding $p_4=2^{-4}$.

With $P_{T_1}^-=1$, we use 
strategy $T_1$ for all nodes on the level, and we already know that
$p_4^{T_1}>2^{-4}$. 

Now consider $P_{T_1}^-=0$, that is, we use $T_1$ only at the query node.
Starting with a simplistic calculation, assume that all $2^\ell$ nodes on 
level $\ell$ had exactly $2m=n/2^\ell$ keys, hence the same number of 
4-tuples. Then the average is
\[\frac{p_4^{T_1}+(2^\ell-1)p_4^{T_2}}{2^{\ell}}=
\frac{2^{-4}+\Theta(1/m)-(2^\ell-1)(2^{-4}+\Theta(1/m^2))}{2^{\ell}}<2^{-4}.\]
The inequality follows because $\ell\geq \frac{2}{3} \lg_2 \tsize$
implies $2^{\ell}>n^{2/3}$ while $m<n/2^{\ell}\leq n^{1/3}$. However,
the number of keys at different nodes on level $\ell$ is not expected to
be the same, and we will handle this below.

We want to prove that the average $p_4$ over all 4-tuples on level
$\ell$ is below $2^{-4}$. To simplify calculations, we can add
$p_4^{T_1}-2^{-4}=\Theta(1/m)$ for each 4-tuple using $T_1$ and
$p_4^{T_2}-2^{-4}=-\Theta(1/m^2)$ for each tuple using $T_2$, and show
that the sum is negative. If the query node has $2m$ keys, all using
$T_1$, we thus add $(2m)^{\underline 4}\Theta(1/m)=\Theta(m^3)$. If a
non-query node has $2m$ keys, we subtract $(2m)^{\underline
  4}\Theta(1/m^2)=\Theta(m^2)$.

We now want to bound the number of keys at the level $\ell$ query node. Since the
stopping criteria has not applied, we know that the $S_2$ collection has
not been applied to any of its ancestors.
\begin{lemma}\label{lem:no-collection} If we have never applied the $S_2$ 
collection on the path
to a query node $v$ on level $j\leq
\frac{5}{6} \lg_2 \tsize$, then $v$ has 
$n/2^j \pm 3\sqrt{n/2^j}$ keys.
\end{lemma}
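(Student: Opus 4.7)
My plan is to prove the lemma by induction on $i \le j$ with hypothesis $|Y_i| \le 3\sqrt{n/2^i}$, where $Y_i = N_i - n/2^i$ and $N_i$ denotes the number of keys at the level-$i$ ancestor of $v$. The base case $N_0 = n$ is immediate.

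For the inductive step I first identify which strategies the level-$i$ ancestor of $v$ can apply. Since $i < j \le \frac{5}{6}\log_2 t$, the ``truly random'' stopping rule from Section~\ref{sec:tree} has not fired on the query path, so the ancestor is running $T^*$, $T^-$, or $T_1$, each of which is a mixture of $S_1$, $S_2$, $S_3$. By the lemma's hypothesis, $S_2$ is never chosen along the path to $v$, so the strategy is either $S_1$ or $S_3$. Writing $2m = N_i$, $S_1$ produces a $\{\lceil m\rceil,\lfloor m\rfloor\}$ split and $S_3$ sends $\lceil m + \sqrt{m/2}\rceil$ to a random child, so in either case the query-path child receives $N_{i+1} = N_i/2 + Z_i$ keys with the \emph{deterministic} bound $|Z_i| \le \sqrt{N_i/4} + 1$.

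Iterating the recurrence $Y_{i+1} = Y_i/2 + Z_i$ gives the telescoped form
\[ Y_j = \sum_{i=0}^{j-1} \frac{Z_i}{2^{j-1-i}}. \]
By the inductive hypothesis $N_i \le n/2^i + 3\sqrt{n/2^i}$, and since $j \le \frac{5}{6}\log_2 t$ implies $n/2^i \ge n/2^j = \Omega(n^{1/6})$, we have $N_i = (1+o(1))\,n/2^i$ and hence $|Z_i| \le (1+o(1))\sqrt{n/2^{i+2}} + 1$. Reindexing by $k = j-i$ converts the bound into a geometric series:
\[ |Y_j| \le (1+o(1))\sqrt{n/2^j}\sum_{k=1}^{\infty} 2^{-k/2} + O(j) = (\sqrt{2}+1+o(1))\sqrt{n/2^j} + O(\log n). \]
Since $\sqrt{2}+1 < 3$ and $O(\log n) = o(n^{1/12}) = o(\sqrt{n/2^j})$, we conclude $|Y_j| \le 3\sqrt{n/2^j}$ for all sufficiently large $n$, closing the induction.

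The main subtlety is the apparently circular dependence between bounding $N_i$ (needed to control $|Z_i|$) and bounding the very quantity $Y_j$ we are after; induction resolves this. Two features make the argument go through: the geometric sum $\sum_{k\ge 1} 2^{-k/2} = \sqrt{2}+1$ converges to a constant strictly less than $3$, and the hypothesis $j \le \frac{5}{6}\log_2 t$ ensures the $\sqrt{n/2^j}$ term dominates the $O(\log n)$ slack accumulated from rounding in $S_1$ and in the ceiling in $S_3$.
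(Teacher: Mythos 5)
Your proof is correct and follows essentially the same approach as the paper's, which also observes that only $S_1$ and $S_3$ can appear on the path and that each splits $2m$ keys into $m \pm (\sqrt{m/2}+1)$, then closes by induction from the root. The telescoped form $Y_j = \sum_{i<j} Z_i/2^{j-1-i}$ and the geometric-series bound $\sum_{k\ge 1}2^{-k/2} = \sqrt{2}+1 < 3$ is a slightly more explicit way of carrying out the same induction (and incidentally shows that $\sqrt{2}+1$ is the optimal constant); one small overestimate on your part is the $O(j)$ slack term, which is actually $O(1)$ since the $+1$ rounding errors are also damped by the factors $2^{-(j-1-i)}$, though this does not affect the conclusion.
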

\begin{proof}
On the path to $v$, we have only applied strategies $S_1$ and $S_3$. Hence, if 
an ancestor of $v$ has $2m$ keys, then each child gets
$m\pm (\sqrt {m/2}+1)$ keys. The bound follows by induction
starting with $2m=n$ keys at the root on level $0$.
\end{proof}
Our level $\ell$ query node thus has $\Theta(n/2^\ell)$ keys and contributes
$O((n/2^\ell)^3)$ to the sum.

To lower bound the negative contribution from the non-query nodes on
level $\ell$, we first note that they share all the
$n-O(n/2^\ell)=\Omega(n)$ keys not on the query path. The negative
contribution for a node with $2m$ keys is $\Omega(m^2)$. By convexity,
the total negative contribution is minimized if the keys are evenly
spread among the $2^\ell-1$ non-query nodes, and even less if we
distributed on $2^\ell$ nodes. The total negative contribution is
therefore at least
$2^{\ell}\,\Omega((n/2^{\ell})^2)=\Omega(n^2/2^\ell)$. This dominates
the positive contribution from the query node since $(2^\ell)^2\geq
n^{4/3}=\omega(n)$. Thus we conclude that $p_4<2^{-4}$ when
$P_{T_1}^-=0$.  This completes the proof that we for level $\ell$ can
find a value of $P_{T_1}^-\in[0,1]$ such that $p_4=2^{-4}$, hence
the proof that the distribution tree described in Section
\ref{sec:tree} exists, hashing all keys 4-independently.

\subsubsection{Expected Query Time}
We will now study the expected query cost for our designated query key
$q$. We only consider the cost in the event that the $S_2$ collection
is applied at the query node at some level $\ell\in [\frac{2}{3}
  \lg_2 \tsize,\frac{5}{6} \lg_2 \tsize]$. Assume that this
happened. Then $S_2$ has not been applied previously on the query
path, so the event can only happen once with a given distribution (no
over-counting).  By Lemma \ref{lem:no-collection}, our query node has
$n/2^\ell \pm 3\sqrt{n/2^\ell}$ keys. With probability $1/2$, these
all go to the query child which represents an interval of length
$\tsize/2^{\ell+1}$. Since $n=2t/3$, we conclude that the query child
gets overloaded by almost a factor $4/3$. By Lemma
\ref{lem:overflow-query}, the expected cost of searching $q$ is then
$\Omega(n/2^{\ell})$. This assumed the event that $S_2$ collection was 
applied to the query path on level $\ell$ and not on any level $i<\ell$.

On the query path on every level $i\leq \ell$, we know that the probability of applying $S_2$ 
provided that $S_2$ has not already been applied is $\Theta(1/m)$ where
$m=\Theta(n/2^i)$ by Lemma \ref{lem:no-collection}. The probability
of applying $S_2$ on level $\ell\in [\frac{2}{3} \lg_2 \tsize,\frac{5}{6} \lg_2 \tsize]$ is
therefore $(1-\sum_{i=0}^{\ell-1}O(2^i/n))\Theta(2^\ell/n)=\Theta(2^\ell/n)$,
so the expected search cost from this level is $\Theta(1)$. Since our event
can only happen on one level for a given distribution, we sum this
cost over the $\Omega(\lg n)$ levels in 
$[\frac{2}{3} \lg_2 \tsize,\frac{5}{6} \lg_2 \tsize]$. We conclude that
with our 4-independent scheme, the expected cost of searching the
designated key is $\Omega(\lg n)$. Thus we have proved
\begin{theorem}\label{thm:4-ind-query} 
For any set of $n$ given keys plus a distinct query key,
there exists a 4-independent hashing scheme such that if it
is used to insert the  $n$ keys in a linear probing table of size 
$t=2n$, then the query takes $\Omega(\log n)$ time.
\end{theorem}

\section{Minwise Independence via $k$-Independence}

Recall that a hash function $h$ is \emph{$\eps$-minwise independent} if
for any key set $S$ and distinct query key $q\not\in S$, we have $\Pr_{h}[ h(q) < \min h(S) ]
= \frac{1 \pm \eps}{|S|+1}$.  

Indyk~\cite{indyk01minwise} proved that $O(\lg
\frac{1}{\eps})$-independent hash functions are $\eps$-minwise independent.
His proof is not based on moments but uses another standard tool enabled by $k$-independence: the
inclusion-exclusion principle. Say we want to bound the probability that
at least one of $n$ events $A_0,\ldots,A_{n-1}$ occurs.  Define
$p(k) = \sum_{S\subseteq[n], |S|=k} \Pr\rule{0em}{2ex}\left[\bigcap_{i\in S}
  A_i\right]$. The probability that at least one event occurs
is, by inclusion-exclusion, $\Pr\left[\bigcup_{i\in [n]}A_i\right]=
  p(1) - p(2) + p(3) - p(4) + \dots$, and if $k\leq n$ is odd,
then  $\Pr\left[\bigcup_{i\in [n]}
      A_i\right]\in\left[\sum_{j=1}^{k-1} (-1)^{j-1} p(j),\  \sum_{j=1}^{k} (-1)^{j-1} p(j)\right]$. The gap between the
    bounds is $p(k)$. If the events $A_0,\ldots,A_{n-1}$ are $k$-independent, then
    $p(1),..,p(k)$ have exactly the same values as in the fully independent
    case. Thus, $k$-independence achieves bounds exponentially close
    to those with full independence, whenever probabilities can be
    computed by inclusion-exclusion and $p(k)$ decays exponentially
    in $k$. This turns out to be the case for minwise independence:
    we can express the probability that at least some key in $S$ is
    below $q$ by inclusion-exclusion.

In this paper, we show that, for any $\eps>0$, there exist $\Omega(\lg
\frac{1}{\eps})$-independent hash functions that are no better than
$\eps$-minwise independent. Indyk's ~\cite{indyk01minwise} simple
analysis via inclusion-exclusion is therefore tight: $\eps$-minwise
independence requires $\Omega(\lg \frac{1}{\eps})$-independence.

To prove the result for a given $k$, our goal is to construct a
$k$-independent distribution of hash values for $n$ regular keys and a
distinct query key $q$, such that the probability that $q$ gets the minimal
hash value is $\big( 1+2^{-O(k)} \big)/(n+1)$.

We assume that $k$ is even and divides $n$. Each hash value will be
uniformly distributed in the unit interval $[0,1)$. Discretizing this
  continuous interval does not affect any of the calculations below,
  as long as precision $2\lg n$ or more is used (making the
  probability of a non-unique minimum vanishingly small).

For our construction, we divide the unit interval into $\frac{n}{k}$
subintervals of the form $\big[ i \frac{k}{n}, (i+1) \frac{k}{n} \big)$.
The regular keys are distributed totally randomly between these
subintervals. Each subinterval $I$ gets $k$ regular keys in
expectation. We say that $I$ is \emph{exact} if it gets exactly $k$
regular keys. Whenever $I$ is not exact, the regular keys are placed
totally randomly within it.

The distribution inside an exact interval $I$ is dictated by a parity
parameter $P\in \{0,1\}$. We break $I$ into two equal halves, and
distribute the $k$ keys into these halves randomly, conditioned on the
parity in the first half being $P$. Within its half, each key gets an
independent random value. If $P$ is fixed, this process is 
$(k-1)$-independent. Indeed, one can always deduce the half of a key $x$ based
on knowledge of $k-1$ keys, but the location of $x$ is totally uniform
if we only know about $k-2$ keys. If the parity parameter $P$ is
uniform in $\{0,1\}$ (but possibly dependent among exact intervals), the
overall distribution is still $k$-independent.

The query is generated uniformly and independent of the
distribution of regular keys into intervals. For each exact interval
$I$, if the query is inside it, we set its parity parameter $P_I =
0$. If $I$ is exact but the query is outside it, we toss a biased coin
to determine the parity, with $\Pr[P_I = 0] = (\frac{1}{2} -
\frac{k}{n}) / (1 - \frac{k}{n})$. Any fixed exact interval receives
the query with probability $\frac{k}{n}$, so overall the distribution
of $P_I$ is uniform. It is only via these parity parameters that 
the query effects the distribution of the regular keys within the intervals.

We claim that the overall process is $k$-independent. Uniformity of
$P_I$ implies that the distribution of regular keys is
$k$-independent. In the case of $q$ and $k-1$ regular keys, we also
have full independence, since the distribution in an interval is
$(k-1)$-independent even conditioned on $P$.

It remains to calculate the probability of $q$ being the minimum under
this distribution. First we assume that the query landed in an exact
interval $I$, and calculate $p_{\min}$, the probability that $q$ takes
the minimum value within $I$. Define the random variable $X$ as the
number of regular keys in the first half. By our process, $X$ is
always even.

If $X=x > 0$, $q$ is the minimum only if it lands in the first half
(probability $\frac{1}{2}$) and is smaller than the $x$ keys already
there (probability $\frac{1}{x+1}$). If $X=0$, $q$ is the minimum
either if it lands in the first half (probability $\frac{1}{2}$), or
if it lands in the second half, but is smaller than everybody there
(probability $\frac{1}{2 (k+1)}$). Thus,
\[ p_{\min} = 
  \Pr[X=0] \cdot \big( \tfrac{1}{2} + \tfrac{1}{2 (k+1)} \big)
+ \sum_{x=2,4,\twodots, k} \Pr[X=x] \cdot \tfrac{1}{2(x+1)} \]

To compute $\Pr[X=x]$, we can think of the distribution into halves as
a two step process: first $k-1$ keys are distributed randomly; then,
the last key is placed to make the parity of the first half
even. Thus, $X=x$ if either $x$ or $x-1$ of the first $k-1$ keys
landed in the first half. In other words:
\[ \Pr[X=x] = \tbinom{k-1}{x} / 2^{k-1} + \tbinom{k-1}{x-1} / 2^{k-1}
            = \tbinom{k}{x} / 2^{k-1}\] 

No keys are placed in the first half iff none of the first $k-1$ keys
land there; thus $\Pr[X=0] = 1/2^{k-1}$. We obtain:
\[ p_{\min} = \frac{1}{2^k (k+1)} + \frac{1}{2^k}
    \sum_{x=0,2,\twodots, k} \frac{1}{x+1} \binom{k}{x} \]
But $\frac{1}{x+1} \binom{k}{x} = \frac{1}{k+1} \binom{k+1}{x+1}$.
Since $k+1$ is odd, the sum over all odd binomial coefficients is
exactly $2^{k+1}/2$ (it is equal to the sum over even binomial
coefficients, and half the total). Thus, $p_{\min} = \frac{1}{2^k
  (k+1)} + \frac{1}{k+1}$, i.e.~$q$ is the minimum with a probability
that is too large by a factor of $1+2^{-k}$.

We are now almost done. For $q$ to be the minimum of all keys, it has
to be in the minimum non-empty interval. If this interval is exact,
our distribution increases the chance that $q$ is minimum by a factor
$1+2^{-k}$; otherwise, our distribution is completely random in the
interval, so $q$ is minimum with its fair probability. Let $Z$ be the
number of regular keys in $q$'s interval, and let $\evt$ be the event
that $q$'s interval is the minimum non-empty interval. If the
distribution were truly random, then $q$ would be minimum with
probability:
\[ \frac{1}{n+1} = \sum_z \Pr[Z=z] \cdot \Pr[\evt \mid Z=z] \cdot
\frac{1}{z+1} \]
In our tweaked distribution, $q$ is minimum with probability:
\begin{eqnarray*}
& & \sum_{z\ne k} \Pr[Z=z] \cdot \Pr[\evt \mid Z=z] \cdot \frac{1}{z+1} 
    + \Pr[Z=k] \cdot \Pr[\evt \mid Z=k] \cdot \frac{1+2^{-k}}{k+1}
\\
&=& \frac{1}{n+1} ~+~ \Pr[Z=k]\cdot \Pr[\evt \mid Z=k] \cdot \frac{2^{-k}}{k+1}
\end{eqnarray*}

But $Z$ is a binomial distribution with $n$ trials and mean $k$; thus
$\Pr[Z=k] = \Omega(1/\sqrt{k})$. Furthermore, $\Pr[\evt \mid Z=k] \ge
\frac{k}{n}$, since $q$'s interval is the very first with probability
$\frac{k}{n}$ (and there is also a nonzero chance that it is not the
first, but all interval before are empty). Thus, the probability is
off by an additive term $\frac{\Omega(2^{-k} / \sqrt{k})}{n}$. This
translates into a multiplicative factor of $1 + 2^{-O(k)}$.  Thus we have proved
\begin{theorem}\label{thm:k-ind-min} 
For any set $S$ of $n$ given keys plus a query key $q\not\in S$,
there exists a $k$-independent scheme $h$ such that 
$\Pr_h[h(q)<\min h(S)]=(1+1/2^{O(h)})/(n+1)$.
\end{theorem}

\section{Multiply-Shift hashing}\label{sec:ms}
\newcommand{\calU}{\mathcal{U}} \newcommand{\calO}{\mathcal{O}} 

We will now show that the simplest and fastest known universal
\cite{dietzfel97closest} and 2-independent \cite{dietzfel96universal}
hashing schemes have bad expected performance when used for linear
probing and minwise hashing on some of the most common structured
data; namely a set of consecutive numbers. This is a nice contrast to
the result of Mitzenmacher and Vadhan \cite{mitzenmacher08hash} that
any 2-independent hashing scheme works if the input data has enough
entropy.

\subsection{Linear probing}\label{sec:ms-lin} Our result is inspired by
negative experimental findings from \cite{thorup12kwise}.  The
essential form of the schemes considered have the following basic
form: we want to hash $\ell_{in}$-bit keys into $\ell_{out}$-bit
indices. Here $\ell_{in}\geq\ell_{out}$, and the indices are used for
the linear probing array. For the typical case of a half full table,
we have $2^{\ell_{out}}=t\approx 2 n$. In particular, $t>n$.

Depending on details of the scheme, for some $\ell\geq
\ell_{in},\ell_{out}$, we pick a random multiplier $a\in [2^\ell]$,
and compute
\begin{equation}\label{eq:mult-shift}
h_a(x)=\lfloor (ax\bmod 2^\ell)/2^{\ell-\ell_{out}}\rfloor.
\end{equation}
We are going to show that if we use this
scheme for a linear probing table of size $t=2^{\ell_{out}}=2n$, and if we try to insert the keys
in $[n]=\{0,\ldots,n-1\}$, then the expected average insertion time is $\Omega(\log n)$.

We refer to the scheme in \req{eq:mult-shift} as the {\em basic multiply-shift scheme}.
The mod-operation is easy, as we just have to discard overflowing
bits. If $\ell\in\{8,16,32,64\}$, this is
done automatically in a programming language like C \cite{KR88}.
 The division with rounding is just a right shift by
$s=\ell-\ell_{out}$, so in C we get the simple code \texttt{(}$a$\texttt{*}$x$\texttt{)>>}$s$
and the cost is dominated by a single multiplication.  For the plain
universal hashing in \cite{dietzfel97closest}, it suffices that $\ell\geq\ell_{in}$ but then the multiplier $a$
should be odd. For 2-independent hashing as in \cite{dietzfel96universal}, 
we need $\ell\geq
\ell_{in}+\ell_{out}-1$. Also we need to add a random number $b$, but
as we shall discuss in the end, these details have no essential impact
on our analysis. However, our lower bounds for linear
probing do assume that the last shift takes out at least one bit, hence that 
\begin{equation}\label{eq:ell-out}
\ell>\ell_{out}.
\end{equation}
It is instructive to compare \req{eq:mult-shift} with the
corresponding classic scheme $((ax+b) \bmod p)\bmod 2^{\ell_{out}}$
for some large enough prime $p$. For this classic scheme,
\cite{pagh07linprobe} already proved an $\Omega(\log n)$ lower bound on
the average insertion time but with a different bad instance. The first
mod-operation in the classic scheme is with a prime instead of the
power of two \req{eq:mult-shift}. The second mod-operation in the classic scheme limits the
range to $\ell_{out}$-bit integers by saving the $\ell_{out}$ least
significant bits whereas the corresponding division in
\req{eq:mult-shift} saves the $\ell_{out}$ most significant
bits. These differences both lead to a quite different mathematical
analysis.

As mentioned, our basic bad example will be where the keys form the 
interval $[n]$. However, the problem will not
go away if this interval is shifted or not totally full, or replaced
by an arithmetic progression.

When analyzing the scheme, it is convenient
to view both the multiplier and the hash value before
the division as fractions in the 
unit interval $[0,1)$, defining $a^\downarrow=a/2^\ell$, and
\[h^\downarrow_a(x)=(ax\bmod 2^\ell)/2^{\ell}=a^\downarrow x\bmod 1.\]
Then $h_a(x)=\lfloor h^\downarrow_a(x)2^{\ell_{out}}\rfloor$. We think
of the unit interval as circular, and for any
$x\in[0,1)$, we define 
\[\|x\|=\min \{x\bmod 1, -x\bmod 1\}.\]
This is the distance from 0 in the circular unit interval.
\begin{lemma}\label{lem:avg-cost}
Let the multiplier $a$ be given and suppose for some $x\in \{1,\ldots,n-1\}$ 
that $\| h^\downarrow_a(x)\|\leq 1/(2t)$. Then, when
we use $h_a$ to hash $[n]$ into a linear probing table,
the average cost per key is $\Omega(n/x)$.
\end{lemma}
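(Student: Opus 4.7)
The essential observation is that $h_a^0$ is a group homomorphism from $(\mathbb{Z},+)$ to $([0,1),+\bmod 1)$: for any integers $y,z$, $h_a^0(y+z)\equiv h_a^0(y)+h_a^0(z)\pmod 1$. Writing $\eps := h_a^0(x)$ as a signed value in $[-1/(2m),1/(2m)]$, this means that the hash values of an arithmetic progression of common difference $x$ form an arithmetic progression of tiny step $\eps$ in the circular unit interval, and hence stay bunched in a very small window. The plan is to exploit this by partitioning $[n]$ into the $x$ residue classes modulo $x$, and showing that each class already produces an overflow that, via Lemma~\ref{lem:overflow-constr}, contributes $\Omega((n/x)^2)$ to the insertion cost.

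For each $r\in\{0,1,\ldots,x-1\}$, let $R_r=\{r,r+x,r+2x,\ldots\}\cap[n]$, so $|R_r|\ge \lfloor n/x\rfloor$. The homomorphism identity gives that the $h_a^0$-image of $R_r$ lies within circular distance $(|R_r|-1)|\eps|\le n/(2mx)$ of $h_a^0(r)$. After multiplying by $m$ and flooring, all keys of $R_r$ land in a window of at most $\lceil n/(2x)\rceil+1$ consecutive array positions. Provided $x$ is at most a suitable constant fraction of $n$ (outside this regime the target bound $\Omega(n/x)=\Omega(1)$ is trivial, since any insertion costs at least one probe), this window is overfilled by $\Delta_r=|R_r|-\lceil n/(2x)\rceil-1=\Omega(n/x)$ keys, and Lemma~\ref{lem:overflow-constr} gives an insertion cost of $\Omega(\Delta_r^2)=\Omega((n/x)^2)$ just from the keys of $R_r$.

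Finally, the residue classes $R_0,\ldots,R_{x-1}$ partition $[n]$, and as noted in the proof of Lemma~\ref{lem:overflow-constr}, extra keys hashing into a window can only increase the displacements of the keys already inside it. Hence the per-class lower bounds on insertion cost may simply be summed, yielding a total insertion cost of $\Omega\bigl(x\cdot (n/x)^2\bigr)=\Omega(n^2/x)$; dividing by $n$ gives the claimed average cost of $\Omega(n/x)$ per key. The one mildly delicate point is that the window around $h_a^0(r)$ may wrap past position $m-1$, but the overflow lemma applies equally to a wrapped circular interval (or one can shift each cluster to an unwrapped representative before invoking it). The rest is routine constant bookkeeping, and the main conceptual content is simply the homomorphism-plus-partition trick in the first paragraph.
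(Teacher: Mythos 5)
Your proof is correct and follows essentially the same route as the paper: partition $[n]$ into the $x$ residue classes modulo $x$, observe via the homomorphism property that each class's hash values are bunched into a window of about half as many array slots as there are keys, conclude an $\Omega(n/x)$-size overflow per class, and sum the per-class costs. The only cosmetic difference is that you explicitly invoke Lemma~\ref{lem:overflow-constr} and treat the large-$x$ edge case and circular wrap-around, whereas the paper's proof states the displacement argument inline and leaves those details implicit.
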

\begin{proof}
The case studied is illustrated in Figure~\ref{fig:5cycle}.
\begin{figure}
\centerline{\includegraphics[width=0.3\textwidth]{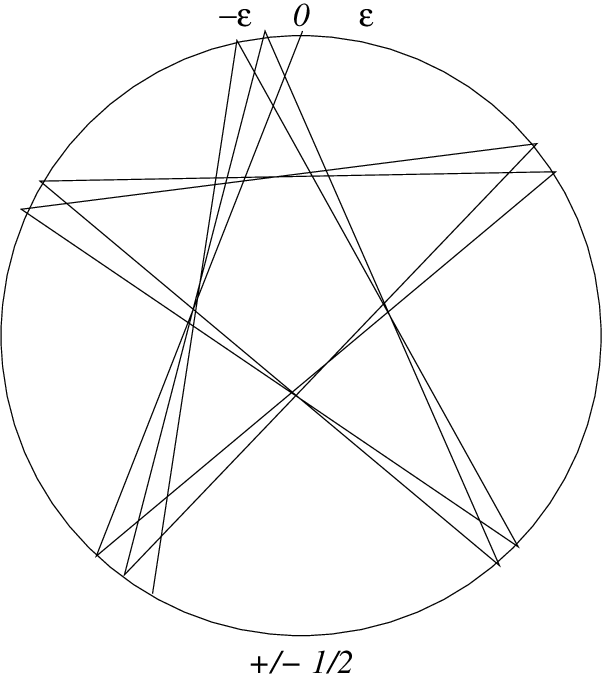}}
\caption{Case where $\|h^\downarrow_a(5)\|\leq \eps$.}\label{fig:5cycle}
\end{figure}
We can assume that $n/x\geq 8$ since the cost of inserting a key is
always at least a constant.  For each $k\in[x]$, consider the set
$[n]^x_k=\{y\in[n]\;|\;y\bmod x=k\}$.  The keys in $[n]^x_k$ are only $1/(2t)$
apart since for every $y$, $h^\downarrow_a(y+x)-h^\downarrow_a(y)=h^\downarrow_a(x)$. Therefore
the $q\geq \lfloor n/x
\rfloor\geq 8$ keys from $[n]^x_k$ map to an interval of length
$(q-1)/(2t)$, which means that $h_a$ distributes $[n]^x_k$ on at most
$\lceil q/2\rceil+1<3q/4$ consecutive array locations.  Linear probing
will have to spread $[n]^x_k$ on $q$ locations, so on the average, the
keys in $[n]^x_k$ get a displacement of $\Omega(q)=\Omega(n/x)$. 
This analysis applies to every equivalence class modulo
$x$, so we get an average insertion cost of $\Omega(n/x)$ over all
the keys.  The above average costs only measures the interaction among
keys from the same equivalence class modulo $x$. If the ranges of hash
values from different classes overlap, the cost will be bigger.
\end{proof}
Note that $\|h^\downarrow_a(x)\|\leq 1/(2t)$ implies that
$h^\downarrow_a(x)$ is contained in an interval of size $1/t$ around 0.
From the universality arguments of \cite{dietzfel97closest,dietzfel96universal} 
we know that
the probability of this event is roughly $1/t$ (we shall return with an
exact statement and proof later).
We would like to conclude
that the expected average cost is $\sum_{x=1}^n \Omega(n/x)/t=\Omega(\lg n)$.
The answer is correct, but the calculation cheats in the sense
that for a single multiplier $a$, we may have many different $x$ such that $\|h^\downarrow_a(x)\|\leq 1/(2t)$,
and the associated costs should not all be added up.

To get a proper lower bound, for any given multiplier $a$, we let
$\mu_a$ denote the minimal positive value such that $\|h^\downarrow_a(\mu_a)\|\leq
1/(2t)$.  We note that there cannot be any $x<y<\mu_a$ 
at distance at most $1/(2t)$, for then we would have
$\|h^\downarrow_a(y-x)\|=\|h^\downarrow_a(y)-h^\downarrow_a(x)\|\leq 1/(2t)$.

If $\mu_a<n$, then by Lemma \ref{lem:avg-cost}, the average
insertion cost over keys is $\Omega(n/\mu_a)$. Therefore, if $a$ is random over some
probability distribution (to be played with as we go along), the 
expected (over $a$) average (over keys) insertion cost is lower bounded by
\begin{equation}\label{eq:cost}
\Omega\left(\sum_{x=1}^n \Pr_a[\mu_a=x]n/x\right).
\end{equation}
\begin{lemma}\label{lem:bad} For a given multiplier $a$, consider any $x<n$ 
such that 
$\|h^\downarrow_a(x)\|\leq 1/(2t)$. Then $x\neq\mu_a$ if and only if
for some prime factor $p$ of $x$, $\|h^\downarrow_a(x/p)\|
\leq 1/(2pt)$. 
\end{lemma}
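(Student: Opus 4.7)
The plan is to rewrite the lemma as a statement about the nearest integer $N(x)$ to $ax/2^\ell$, and then settle the substantive direction by an elementary integer cross-multiplication; no continued-fraction machinery is needed.

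Setting $\epsilon_x := ax/2^\ell - N(x)$ gives $|\epsilon_x|=\|h^0_a(x)\|$, and analogously for $y$. For any prime $p\mid x$, writing $ax/(p\cdot 2^\ell) = N(x)/p + \epsilon_x/p$ and doing a short case analysis will show that $\|h^0_a(x/p)\|\leq 1/(2pm)$ holds precisely when $p\mid N(x)$: if $p\mid N(x)$ then $h^0_a(x/p)\equiv \epsilon_x/p\pmod{1}$ has norm $|\epsilon_x|/p\leq 1/(2pm)$; if $p\nmid N(x)$ then $h^0_a(x/p)$ lies within $|\epsilon_x|/p$ of a nonzero multiple of $1/p$ and so has norm strictly exceeding $1/(2pm)$. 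Under this reformulation the lemma reduces to the cleaner equivalence $x\neq\mu_a \Longleftrightarrow \gcd(x,N(x))>1$.

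The $\Leftarrow$ direction of this reformulation is immediate: any prime $p$ dividing $\gcd(x,N(x))$ gives $\|h^0_a(x/p)\|\leq 1/(2pm)\leq 1/(2m)$, so $x/p$ is a strictly smaller positive integer satisfying the defining inequality of $\mu_a$, whence $x\neq\mu_a$. For the $\Rightarrow$ direction I would set $y=\mu_a<x$ (both less than $m$ since $x<n<m$) and compute
\[ yN(x)-xN(y) \;=\; y(ax/2^\ell-\epsilon_x)-x(ay/2^\ell-\epsilon_y) \;=\; x\epsilon_y-y\epsilon_x, \]
so $|yN(x)-xN(y)|\leq(x+y)/(2m)<1$. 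Being an integer of absolute value below one, it vanishes, giving $N(x)/x=N(y)/y$. Writing this common rational in lowest terms as $P/Q$ forces $x=dQ$ and $y=kQ$ for positive integers $d>k$, and consequently $\gcd(x,N(x))=d\geq 2$, closing the reduction.

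The only nontrivial step will be the cross-multiplication identity $yN(x)=xN(y)$, which rests entirely on the bound $(x+y)/(2m)<1$ — this is precisely where the hypotheses $x<n<m$ and $y=\mu_a<x$ come into play, letting a small-magnitude integer collapse to zero. Everything else amounts to routine bookkeeping around the nearest-integer lift of $h^0_a$.
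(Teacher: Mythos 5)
Your proof is correct, and it takes a genuinely different route from the paper's.

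The paper's proof of the nontrivial direction is geometric: it first shows that any $y$ with $\|h^0_a(y)\|\leq 1/(2m)$ that is \emph{not} a multiple of $\mu_a$ must satisfy $y\geq 2m-\mu_a$ (because $h^0_a$ would otherwise map $\{0,\dots,y+\mu_a-1\}$ to an impossibly dense set of points on the cyclic unit interval), and since $x<n<m$ and $\mu_a<x$, this forces $x=i\mu_a$ for some integer $i\geq 2$; a prime factor $p$ of $i$ then does the job via $\|h^0_a(x/p)\|=(i/p)\|h^0_a(\mu_a)\|\leq 1/(2pm)$. Your proof instead works with the nearest-integer lift $N(\cdot)$: you reduce the lemma to the clean statement $x\neq\mu_a\iff\gcd(x,N(x))>1$ (correct, and worth recording -- the case $p\mid N(x)$ gives norm $|\epsilon_x|/p\leq 1/(2pm)$, while $p\nmid N(x)$ gives norm at least $1/p-1/(2pm)>1/(2pm)$), and the key step is the cross-multiplication $yN(x)-xN(y)=x\epsilon_y-y\epsilon_x$, whose magnitude is below $1$ precisely because $x+\mu_a<2m$, forcing the integer to vanish and yielding $N(x)/x=N(y)/y$. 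Writing the common ratio in lowest terms $P/Q$ then gives $Q\mid x$ and $d:=x/Q\geq 2$ with $\gcd(x,N(x))=d$. The paper's route buys a stronger structural fact as a byproduct, namely that all small solutions are exactly the multiples of $\mu_a$ below $n$; your route is more local and arithmetic, avoiding the winding picture entirely, at the cost of not exhibiting that arithmetic-progression structure. One small point worth making explicit if you write this up: if $N(x)=0$ then $\gcd(x,N(x))=x\geq 2$ and the conclusion is immediate, so the division by $N(x)$ is harmless; the bound $x+\mu_a<2m$ you invoke does indeed follow from $m>n>x>\mu_a$ as stated in the surrounding text.
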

\begin{proof}
The ``if'' part is trivial. By minimality of $\mu_a$, we have
$x>\mu_a$. 

Since $\|h^\downarrow_a(\mu_a)\|\leq 1/(2t)$, for
any integer $i<t$, we have $\|h^\downarrow_a(i\mu_a)\|=i\|h^\downarrow_a(\mu_a)\|$.
Suppose now that $x=j\mu_a$. Then $1<j\leq x<n<t$, so
for any $i\leq j$, we have $\|h^\downarrow_a(i\mu_a)\|\leq 
\|h^\downarrow_a(x)\|\leq  1/(2t)$. We can
therefore take any prime factor $p$ of $j$, and conclude that
$\|h^\downarrow_a(x/p)\|\leq \|h^\downarrow_a(x)\|\leq 1/(2t)$. Since $p$ is also
a prime factor of $x$, this proves the lemma if $x$ is a multiple of $\mu_a$.

To complete the proof we will argue that $x$ has to be a multiple of $\mu_a$.
Consider any $y$ such that $\|h^\downarrow_a(y)\|\leq 1/(2t)$ where
$y$ is not a multiple of $\mu_a$. Then $h^\downarrow_a$ maps
$\{0,\ldots,y+\mu_a-1\}$ to points in the cyclic unit interval that
are at most $1/(2t)$ apart (c.f.,  Figure~\ref{fig:5cycle}). It follows
that $y\geq 2t-\mu_a$. However, we have $\mu_a<x<n<t$, which
implies that $x< 2t-\mu_a\leq y$. It follows that $x$ has to be a multiple
of $\mu_a$.
\end{proof}
To illustrate the basic accounting idea,
assume for simplicity that we have a perfect distribution $\calU$ on $a$ that
for any fixed $x>0$ distributes $h^\downarrow_a(x)$ uniformly in the unit interval. Then for
any $x$ and $\eps<1/2$,
\begin{equation}\label{eq:fake}
\Pr_{a\leftarrow\calU}[\|h^\downarrow_a(x)\|\leq \eps]=2\eps.
\end{equation}
Then by Lemma \ref{lem:bad},
\begin{eqnarray}
\Pr_{a\leftarrow\calU}[\mu_a=x]
&\geq&\Pr_{a\leftarrow\calU}[\| h^\downarrow_a(x)\|\leq 1/(2t)]-\sum_{p\ {\rm prime\ factor\ of}\ x} \Pr_{a\leftarrow\calU}[\| h^\downarrow_a(x/p)\|\leq 1/(2pt)]\nonumber\\
&=&1/t-\sum_{p\ {\rm prime\ factor\ of}\ x} 1/(pt)\nonumber\\
&=& \left(1-\sum_{p\ {\rm prime\ factor\ of}\ x} 1/p\right)/t\label{eq:prop-min}
\end{eqnarray}
We note that the lower-bound \eqref{eq:prop-min} may be 
negative since there are values of $x$ for which 
$\sum_{p\ {\rm prime\ factor\ of}\ x} 1/p=\Theta(\lg\lg x)$. Nevertheless \eqref{eq:prop-min}
suffices with an appropriate reordering of terms. 
From \eqref{eq:cost} we get that the expected average insertion cost
is lower bounded within a constant factor by:
\begin{eqnarray*}
\sum_{x=1}^{n} \Pr_{a\leftarrow\calU}[\mu_a=x]n/x
&\geq &\sum_{x=1}^{n} \left(1-\sum_{{\rm prime\ factor}\ p{\rm\ of}\ x} 1/p\right)n/(xt)\\
&>&
\sum_{x=1}^{n} \left(1-\sum_{{\rm prime}\ p=2,3,5,..} 1/p^2\right)n/(xt)
\end{eqnarray*}
Above we simply moved terms of the form $-n/(xmp)$ where $p$ is a prime 
factor of $x$ to $x'=x/p$ in the form $-n/(x'mp^2)$. Conservatively,
we include $-n/(x'mp^2)$ for all primes $p$ even if $px'>n$.
Since $\sum_{{\rm prime}\ p=2,3,5,..} 1/p^2<0.453$, we get an expected average insertion cost 
of
\begin{eqnarray*}
\Omega\left(\sum_{x=1}^{n} \Pr_{a\leftarrow\calU}[\mu_a=x]n/x\right)
&=&\Omega\left(\sum_{x=1}^{n} 0.547n/(xt)\right)\\
&=&\Omega((n/t)\lg n).
\end{eqnarray*}
We would now be done if we had the perfect distribution $\calU$ on $a$ so
that the equality \eqref{eq:fake} was satisfied. Instead we will
use the weaker statements of the following lemma:
\begin{lemma}\label{lem:eq:approx}
Let $\calO$ be the uniform
distribution on odd $\ell$-bit numbers. For any odd $x<n$
and  $\eps< 1/2$,
\begin{equation}\label{eq:hit-eps}
\Pr_{a\leftarrow\calO}[\|h^\downarrow_a(x)\|\leq \eps]\leq 4\eps
\end{equation}
However, if $\eps$ is an integer multiple of $1/2^\ell$, then
\begin{equation}\label{eq:hit-int}
\Pr_{a\leftarrow\calO}[\|h^\downarrow_a(x)\|\leq \eps]\geq 2\eps.
\end{equation}
\end{lemma}
\begin{proof} 
  When $x$ is odd and $a$ is a uniformly distributed odd $\ell$-bit
  number, then $ax\bmod 2^\ell$ is uniformly distributed odd
  $\ell$-bit number. To get $h^\downarrow_a(x)$, we divide by $2^\ell$, 
  and then we have a uniform distribution
  on the $2^{\ell-1}$ odd multiples of $1/2^\ell$. Therefore
  $\Pr_{a\leftarrow\calO}[\|h^\downarrow_a(x)\|\leq \eps]/\eps$ is
maximized when $\eps=1/2^\ell$, in which case
$\Pr_{a\leftarrow\calO}[\|h^\downarrow_a(x)\|\leq 1/2^\ell]=2/2^{\ell-1}=4\,2^\ell$, matching
the upper bound in \req{eq:hit-eps}.

When $\eps=i/2^\ell$ for some integer $i$, we minimize   $\Pr_{a\leftarrow\calO}[\|h^\downarrow_a(x)\|\leq \eps]/\eps$ when $i$ is even, in which case
we get $\Pr_{a\leftarrow\calO}[\|h^\downarrow_a(x)\|\leq i/2^\ell]=i/2^{\ell-1}=2i/2^\ell$, matching
the lower bound in \req{eq:hit-int}.
\end{proof}
We are now ready to prove our lower bound for the performance of
linear probing with the basic multiply-shift scheme with an odd multiplier.
\begin{theorem}\label{thm:bad-multiply-shift-lin}
  Suppose $\ell_{out}<\ell$
  and that the multiplier $a$ is a uniformly
  distributed odd $\ell$-bit number.  If we use $h_a$ to insert $[n]$
  in a linear probing table, then the expected average insertion cost is
  $\Omega(\lg n)$.
\end{theorem}
\begin{proof}
By assumption $1/(2t)=1/2^{\ell_{out}+1}$ is a multiple of $1/2^\ell$, 
so for odd $x<n$,
\eqref{eq:hit-int} implies
\begin{equation}\label{eq:hit-m}
\Pr_{a\leftarrow\calO}[\|h^\downarrow_a(x)\|\leq 1/(2t)]\geq 1/t.
\end{equation}
By Lemma \ref{lem:bad} combined with \eqref{eq:hit-eps} and \eqref{eq:hit-m},
we get for any given odd $x$ that 
\begin{eqnarray}
\Pr_{a\leftarrow\calO}[\mu_a=x]
&\geq&\Pr_{a\leftarrow\calO}[\| h^\downarrow_a(x)\|\leq 1/(2t)]-\sum_{p\ {\rm prime\ factor\ of}\ x} \Pr_{a\leftarrow\calO}[\| h^\downarrow_a(x/p)\|\leq 1/(2pt)]\nonumber\\
&\geq&1/t-\sum_{p\ {\rm prime\ factor\ of}\ x} 2/(pt)\label{eq:mu}
\end{eqnarray}
From \eqref{eq:cost} we get that the expected average insertion cost
is lower bounded within a constant factor by:
\begin{eqnarray}
\sum_{\textnormal{odd }x=1}^{n} \Pr_{a\leftarrow\calO}[\mu_a=x]n/x
&\geq &\sum_{\textnormal{odd }x=1}^{n} \left(1-2\sum_{{\rm prime\ factor}\ p{\rm\ of}\ x} 1/p\right)n/(xt)\nonumber\\
&>&
\sum_{\textnormal{odd }x=1}^{n} \left(1-2\sum_{{\rm prime}\ p=3,5,..} 1/p^2\right)n/(xt)\nonumber\\
&>&
\sum_{\textnormal{odd }x=1}^{n} 0.594\,n/(xt)\nonumber\\
&>& 0.298 (n/t) H_n.\label{eq:calc}
\end{eqnarray}
Above we again moved terms of the form $-n/(xmp)$ where $p$ is a prime 
factor of $x$ to $x'=x/p$ in the form $-n/(x'mp^2)$. Since
$x$ is odd, we only have to consider odd primes factors $p$, and then we used
that $\sum_{{\rm prime}\ p=3,5,..} 1/p^2<0.203$. This completes the
proof of Theorem \ref{thm:bad-multiply-shift-lin}.
\end{proof}
We note that the plain universal hashing from \cite{dietzfel97closest} also
assumes an odd multiplier, so Theorem~\ref{thm:bad-multiply-shift-lin}
applies directly if $\ell_{out}<\ell$. The condition $\ell_{out}<\ell$ 
is, in fact, necessary for
bad performance. If $\ell_{out}=\ell$, then $h_a$ is a permutation for
any odd $a$, and then linear probing works perfectly.

For the 2-independent hashing in \cite{dietzfel96universal} there are two
differences. One is that the multiplier may also be even, but restricting
it to be odd can at most double the cost. The other difference is that
we add an additional 
$\ell$-bit 
parameter $b$, yielding a scheme of the form:
\[h_{a,b}(x)=\lfloor ((ax+b)\bmod 2^{\ell})/2^{\ell-\ell_{out}}\rfloor.\]
The only effect of $b$ is a cyclic shift of the double full buckets, and
this has no effect on the linear probing cost. For the 2-independent hashing,
we have $\ell\geq \ell_{in}+\ell_{out}-1$, so $\ell<\ell_{out}$ if $\ell_{in}>1$.
Hence again we have an expected average linear probing cost of 
$\Omega((n/t)\lg n)$.

Finally, we sketch some variations of our bad input. Currently, we just
considered the set $[n]$ of input keys, but it makes no essential
difference if instead for some integer constants $\alpha$ and $\beta$,
we consider the arithmetic sequence $\alpha [n]+\beta=\{\alpha i
+\beta\,|\,i\in [n]\}$. The $\beta$ just adds a cyclic shift like
the $b$ in 2-independent hashing. If $\alpha$ is odd, then it 
is absorbed in the random multiplier $a$. What we get now is that if for
some $x\in[n]$, we have $\| h^\downarrow_a(\alpha x)\|\leq 1/(2t)$,
then again we get an average cost $\Omega(n/x)$. A consequence
is that no odd multiplier $a$ is universally safe because there
always exists an inverse $\alpha$ (with $a\alpha\bmod 2^\ell=1$) 
leading to a linear cost if $h_a$ is used to insert $\alpha [n]+\beta$.
It not hard to also construct bad examples for even $\alpha$. If
$\alpha$ is an odd multiple of $2^i$, we just have to strengthen
the condition $\ell_{out}<\ell$ to $\ell_{out}<\ell-i$ to get
the expected average insertion cost of $\Omega((n/t)\lg n)$. This kind
of arithmetic sequences could be a true practical problem. For example,
in some denial-of-service attacks, one often just change some bits in the
middle of a header key, and this gives an arithmetic sequence.

Another more practical concern is if the input set $X$ is an $\eps$-fraction
of $[n]$. As long as $\eps>2/3$, the above proof works almost unchanged.
For smaller $\eps$, our bad case is if $\| h^\downarrow_a(x)\|\leq \eps/(2t)$. In that case, for each $k\in[x]$, the
$q=\lfloor n/x\rfloor$ potential keys $y$ from $[n]$ with $y\bmod x=k$ 
would map to an interval of length $\eps(q-1)/(2t)$. This
means that $h_a$ spreads these potential keys 
on at most $\lceil \eps q/2\rceil+1$ consecutive array locations. 
A $\eps$-fraction of these keys are real, so on the average, 
these intervals become double full, leading to an average cost of 
$\Omega(\eps n/x)$. Strengthening $\ell_{out}<\ell$ 
to $\eps\geq 2^{\ell_{out}-\ell}$, we essentially get that all
probabilities are reduced by $\eps$. Thus we end
with a cost of 
$\Omega(\eps^2 (n/t)\lg n)=\Omega(\eps (|X|/t)\lg n)$.

\subsection{Minwise Independence}
We will now demonstrate the lack of minwise independence with 
a hashing scheme of the form
\[h_{a,b}(x)=(ax+b)\bmod 2^{\ell}.\] 
Here $\ell$ is an integer and $a$, $b$, and $x$ are all $\ell$-bit
integers.  Restricting the random parameter $a$ to be odd, it is relatively prime to
$2^{\ell}$, and then $h_{a,b}$ is a permutation. We also note that
here, for minwise hashing, we need the random parameter $b$; for with
$b=0$, we always have $h_{a,0}(0)=0$, which is the unique smallest hash value.
We are going to prove
that this kind of scheme is $\Omega(\log n)$-minwise independent. More precisely,
\begin{theorem}\label{thm:bad-multiply-shift-min}
  Suppose the multiplier $a$ is a uniformly
  distributed odd $\ell$-bit number and that $b$ is uniformly
  distributed $\ell$-bit number. Let $n\in [2^{\ell-1}]$ and 
  $n\leq u\in[2^\ell]$. Then for a uniformly distributed query key in
$[u]\setminus [n]$, we have $\Pr[h_{a,b}(q)<\min h_{a,b}([n])]=\Omega((\log n)/n)$.
\end{theorem}
Before proving the theorem, we discuss its implications. 
First note that for $u=n+1$, the query key is fixed as $q=n$. In this case, the
same lower bound is proved in \cite{broder98minwise} when
the hash function is computed modulo a prime instead of a power of two.
Multiplication modulo a power of two is much faster, and the mathematical analysis is different.

The interesting point in $u\gg n$ is that it corresponds to the case
of a random outlier $q$ versus the dense set $[n]$. By
Theorem \ref{thm:bad-multiply-shift-min}, such an outlier
is disproportionally likely to get the smallest hash value.

Having universe size $u\ll 2^{\ell}$ means that even if we try using
far more random bits $\ell$ than required for the key universe $[u]$,
then this does not resolve the problem that a uniform query $q$ is
disproportionally likely to get the smallest hash value.

Theorem \ref{thm:bad-multiply-shift-min} implies bad minwise performance for many
variants of the scheme. First, if we remove the restriction that $a$
is odd, it can at most halve the probability that $h_{a,b}(q)<\min
h_{a,b}([n])$ so we would still have $\Pr[h_{a,b}(q)<\min
  h_{a,b}([n])]=\Omega((\log n)/n)$. Moreover, this could introduce
collisions, and then we are more concerned with the event
$h_{a,b}(q)\leq \min h_{a,b}([n])$ since ties might be broken
adversarially. Also, as in Section \ref{sec:ms-lin}, if
we only want an $\ell_{out}<\ell$ bits in the hash value, we
can shift out the $\ell-\ell_{out}$ least significant bits, but
this can only increase the chance that $h_{a,b}(q)\leq \min h_{a,b}([n])$.

\begin{proof}[ of Theorem {\ref{thm:bad-multiply-shift-min}}]
As in Section \ref{sec:ms-lin}, it is
convenient to divide $\ell$-bit numbers by $2^\ell$ to get fractions 
in the cyclic unit
interval.  We define $a^\downarrow=a/2^\ell$, $b^\downarrow=b/2^\ell$,
and  
\[h^\downarrow_{a,b}(x)=h_{a,b}(x)/2^\ell=(a^\downarrow x+b^\downarrow)\bmod 1.\]
We note that $h^\downarrow_{a,0}=h^\downarrow_{a}$ from Section
\ref{sec:ms-lin}. In our analysis, we are first going to pick $a$, and
study how $h^\downarrow_{a}$ maps $[n]$ and the random query $q$. 
This analysis will reuse many of the elements from 
Section~\ref{sec:ms-lin} illustrated in Figure \ref{fig:5cycle}.
Later, we will pick the random $b$, which corresponds to a random cyclic rotation by $b^\downarrow$, so that $0$ ends up in what was
position $1-b^\downarrow$ in the image under $h^\downarrow_{a}$

Let $t$ be the smallest power of two not smaller than $n$. Then
$n\leq t\leq 2^\ell/2$. As in Section~\ref{sec:ms-lin},
for any $a$, we define $\mu_a>0$ to be
the smallest number such that $\|h^\downarrow_{a}(\mu_a)\|\leq 1/(2t)$. We are only interested
in the case where $\mu_a<n/4$.  
\drop{
We are essentially want to prove
\begin{equation*}
\Pr[h_{a,b}(q)<\min h_{a,b}([n])]=\Omega(1/\mu_a)
\end{equation*}
Theorem \ref{thm:bad-multiply-shift-min} will then follow from
calculations similar to those in \req{eq:calc}.}

In our cyclic unit interval, we generally view values in $(0,1/2)$ as 
positive and values in $(1/2,1)$ as negative. Also, a value is
between two other values, it is on the short side between them. Positive is
clockwise.

For simplicity, we assume 
that $h^\downarrow_{a}(\mu_a)$ is positive and let 
$\eps_a=h^\downarrow_{a}(\mu_a)$. 
We now claim that the points in $h^\downarrow_{a}([\mu_a])$ are almost
equidistant. More precisely,
\begin{lemma}\label{lem:eq-dist} Considering the points $h^\downarrow_{a}([\mu_a])$ in the
cyclic unit interval, the distance between neighbors is $1/\mu_a\pm\eps_a$.
\end{lemma}
\begin{proof} Let $a'=a^\downarrow-\eps_a/\mu_a$.
Then $a'\mu_a\mod 1=0$. We claim that the $\mu_a$ points in
$a'[\mu_a]\mod 1$ have distance exactly $1/\mu_a$ between
neighbors. Assume for a contradiction, that this is not the case. Then
there should to be some distinct $x,y\in[\mu_a]$ with
$(h^\downarrow_{a}(y)-h^\downarrow_{a}(x))\bmod 1=\Delta<1/\mu_a$. Let
$z=(y-x)\bmod \mu_a$. Then $a'z\bmod 1=\Delta$. Therefore, for every
$i=0,...,\mu_a$, we have $a' iz\bmod 1=i\Delta<1$, and these are
$\mu_a+1$ distinct values. However, $a' iz\bmod 1=a' (iz\bmod
\mu_a)\bmod 1$, so there can only be $\mu_a$ distinct values, hence
the desired contradiction. 

We now know that the points in 
$a'[\mu_a]\mod 1$ have distance exactly $1/\mu_a$ between neighbors, and
for every $x\in [\mu_a]$, we have $h^{\downarrow}(x)=a'x+\eps x/\mu_a \mod 1$
where $\eps x/\mu_a<\eps$. Hence follows that 
distance between any neighbors in $h^\downarrow_{a}([\mu_a])$ is $1/\mu_a\pm\eps_a$.
\end{proof}
Points from $h^\downarrow_a ([\mu_a])$ divide the cyclic unit
interval into $\mu_a$ ``slices''. By Lemma \ref{lem:eq-dist}, each slice
is of length at least $1/\mu_a-\eps_a$. 
Consider some
$k\in [\mu_a]$. The keys $x=k,k+\mu_a,k+2\mu_a,\ldots$, map to
$h^\downarrow_a(k),h^\downarrow_a(k)+\eps_a,h^\downarrow_a(k)+2\eps_a,...$.
We call this the ``thread'' from $h^\downarrow_a(k)$. Thus, for $x\geq \mu_a$, 
$h^\downarrow_a(x)$ is the successor at distance $\eps_a$ from $h^\downarrow_a(x-\mu_a)$ in the thread from $h^\downarrow_a(x\bmod \mu_a)$.

We now consider the image by $h^\downarrow_a$ of our set $[n]$.
For each $k\in[\mu_a]$, the set $[n]^{\mu_a}_k=\{x\in[n]\;|\;x\bmod {\mu_a}=k\}$ has
$d\leq \ceil{n/\mu_a}$ keys that fall in the interval $[h^\downarrow_a(k),
(h^\downarrow_a(k)+d\eps_a)]$ of length $(d-1)\eps_a<(n/\mu_a)\eps_a\leq
1/(2\mu_a)$. We call this the ``filled'' part of the slice, the rest
is ``empty''. The empty part of any slice is bigger than $(1/\mu_a-\eps_a)-1/(2\mu_a)=1/(2\mu_a)-\eps_a$.

We are will study the ``good'' event that $h^\downarrow_a(q)$ and
$1-b^\downarrow$ land strictly inside the empty part of the same slice, for then
with $h_{a,b}$, there is no key from $[n]$ that hash between $0$ and
hash of the query key. If in addition $1-b^\downarrow$ is before
$h^\downarrow_{a}(q)$, then $h_{a,b}(q)<\min
h_{a,b}([n])$. Otherwise, we shall refer to a symmetric case.
\begin{lemma}\label{lem:b-empty}  With $\mu_a\leq n/4$, 
the probability that $1-b^\downarrow$ hash to the empty part of a given slice is at least $1/(4\mu_a)$.
\end{lemma}
\begin{proof}
We know from above that the empty part of any slice is bigger than  
$1/(2\mu_a)-\eps_a$. However, both $1-b^\downarrow$ and the end-points
of the empty interval fall on multiples of $1/2^\ell$, and we want
$1-b^\downarrow$ to fall strictly between the end-points. Since
$1-b^\downarrow$ is uniformly distributed on multiples of $1/2^\ell$,
we get that it falls strictly inside with probability at
least $1/(2\mu_a)-\eps_a-1/2^\ell$. 

Our parameters are chosen such that $\eps_a\leq 1/(2t)\leq 1/2^\ell$,
$n\leq t$, and $\mu_a\leq n/4$, so $1/(2\mu_a)-\eps_a-1/2^\ell\geq 1/(4\mu_a)$. 
\end{proof}

\begin{lemma}\label{lem:q-empty} For any value $u\in (n,2^\ell)$, at least half
the keys in $[u]\setminus[n]$ hash to the empty part of some slice.
\end{lemma}
\begin{proof}
We now consider the potential values of the query key $q=n,...,2^\ell-1$. First, let
$\mu_a^*\in[n,2^\ell-1)$ be the smallest value such that
$\|h_a(\mu_a^*)\|<\eps_a$.  For now we assume that
such a key $\mu_a^*$ exists. We note that $h_a(\mu_a^*)$ must be
negative, for if it was positive, then
$h_a(\mu^*_a-\mu_a)=h_a(\mu^*_a)-\eps_a$, would also satisfy the
condition. We also note that $h_a(\mu_a^*)$ cannot be zero since
$h^\downarrow_a$ is a permutation. Thus we must have
Thus $h_a(\mu^*_a)\in (2^\ell-\eps_a,2^\ell]$.

By definition, all points in $h_a([\mu^*_a])$ are at least $\eps_a$ apart,
so $\mu^*_a\geq 2^\ell/\eps_a\leq 2n$. On the other hand,
$h_a([\mu^*_a, \mu^*_a+\mu_a-1])$ provides a predecessor 
at distance $\eps^*_a<\eps_a$ to every point in $h_a([\mu_a])$, so in 
$h_a([\mu^*_a+\mu_a])$, every point has
a predecessor at distance at most $\eps_a$, so $\mu^*_a+\mu_a>2n$.

For each $k\in [\mu_a]$, the thread of keys from
$[\mu^*_a+\mu_a]^{\mu_a}_k=\{x\in[\mu^*_a+\mu_a]\;|\;x=k\bmod
{\mu_a}\}$ terminates at distance $\eps^*_a$ from the successor of
$h^\downarrow_a(k)$ in $h^\downarrow_a([\mu_a])$, so the thread stays
in the same slice.  This means all keys except those in $[n]$ land in
the empty part of their slice. The same will be the case if we reach
the final key $2^\ell-1$ a key $\mu^*_a$ with
$\|h_a(\mu_a^*)\|<\eps_a$.

The keys from $[\mu^*_a+\mu_a]$ form period $0$. Generally, a period
$i>0$, starts from a key $z_i$ hashing to $(0,\eps_a)$, e.g., period
$1$ starts at $z_1=\mu^*_a+\mu_a$, and it continues until we reach key $2^\ell-1$, or till just before
we get to new key $z_{i+1}$ with $h^\downarrow_a(z_{i+1})\in
(0,\eps_a)$. This implies that $[z_i,z_{i+1})$ like $[\mu^*_a+\mu_a]$
divides intro threads, each staying within a slice between
neighboring points from $h_a^\downarrow[\mu_a]$.

Since $h^\downarrow(z_i)\in (0,\eps_a)$, for
every integer $x$, we have $h^\downarrow(z_i+x)\in (h^\downarrow(x),
h^\downarrow(x+\mu_a))$.
This implies that only the first $n-\mu_a$ elements from
$[z_i,z_{i+1})$ land between consecutive thread elements from $[n]$.
All other elements land in the empty part of their slice. It
also follows that $z_{i+1}\geq z_i+\mu^*_a$, since 
$(h^\downarrow(\mu^*_a),h^\downarrow(\mu^*_a+\mu_a))$ is the first
interval containing $0$. Hence $z_{i+1}-z_i\geq 2n-\mu_a$.

Thus, in the sequence of keys $n,...,2^\ell-1$, we first
have at least $n$ keys landing in empty parts. Next comes periods,
first with $n-\mu_a$ keys landing in filled parts, and
then at least $2n-\mu_a$ keys landing in empty parts. Eventually
we get to a last period $i$, that finishes in key $2^\ell-1$ before
reaching a key $z_{i+1}\in (0,\eps_a)$.
No matter
which key $u<2^\ell-1$, we stop at, we have that at least half the keys
in $[n,u)$ land in empty parts of slices.
\end{proof}
By Lemma \ref{lem:q-empty} we know that when $q$ is
picked randomly from $[n,u)$, then $h^\downarrow_a(q)$ lands in
the empty part of some slice with probability at least $1/2$. By
Lemma \ref{lem:b-empty}, we get $1-b^\downarrow$ in the 
empty part of the same slice with probability at least $1/(4\mu_a)$,
and this is exactly our good event. 
For fixed $a$ but random $b$ and $q$, it happened
with probability $1/(8\mu_a)$. Based on this, we will prove
\begin{lemma}\label{lem:cond-mu} For any given $\gamma\leq n/4$, uniform odd $a\in[2^\ell]$,
uniform $b\in[2^\ell]$, and uniform $q\in[u]\setminus[n]$, 
\[\Pr[h_{a,b}(q)<\min h_{a,b}([n])\mid \mu_a=\gamma]=1/(16\gamma)\]
\end{lemma}
\begin{proof}
We first note that each parameter pair $(a,b)$ has a symmetric twin
$(2^\ell-a,2^\ell-b)$ such that for every key $x$,
$h_{2^\ell-a,2^\ell-b}(x)=2^\ell-h_{a,b}(x)$. Note that $a$ odd
implies that $2^\ell-a$ is also odd, as required. The symmetry 
implies that $\mu_{2^\ell-a}=\mu_a$ while
$\eps_{2^\ell-a}=1-\eps_a$.  In particular this implies that if
we pick a uniformly odd $a$ with $\mu_a=\gamma$, then $\eps_a$ is
positive with probability exactly $1/2$.

Let us assume as we did earlier that $\eps_a$ is positive. 
Let us further assume our good event that $1-b^\downarrow$ and $h^\downarrow_a(q)$
land strictly inside the empty part of the same slice, hence that we get
no hashes from $h_{a,b}([n])$ between $0$ and $h_{a,b}(q)$. If $0$ is before
$h_{a,b}(q)$, we get $h_{a,b}(q)<\min h_{a,b}([n])$, but otherwise,
by symmetry, we get $h_{2^\ell-a,2^\ell-b}(q)<\min h_{2^\ell-a,2^\ell-b}([n])$.
Thus we have a 1-1 correspondence between the parameter choices of two
events:
\begin{itemize}
\item  parameters $a,b,q$ such that $\mu_a=\gamma$, $\eps_a$ is
positive, and $1-b^\downarrow$ and $h^\downarrow_a(q)$
land strictly inside the empty part of the same slice.
\item  parameters $a',b',q$ such that $\mu_{a'}=\gamma$, and
$h_{a',b'}(q)<\min h_{a',b'}([n])$.
\end{itemize}
In the correspondence, depending on $q$, we will either have $(a',b')=(a,b)$ or
$(a',b')=(2^\ell-a,2^\ell-b)$. The two events above are thus equally likely.

Conditioned on $\mu_a=\gamma$, we already saw that $\eps_a$ was
positive with probability $1/2$, and conditioned on that, we got
our good event with probability $1/(8\mu_a)$, for an overall
probability of $1/(16\mu_a)$. Conditioned on $\mu_{a'}=\gamma$, this
is then also the probability that $h_{a',b'}(q)<\min h_{a',b'}([n])$.
\end{proof}
We are now ready to reuse the calculations from Section \ref{sec:ms-lin} that
also defined $\mu_a$ as the smallest positive number such that 
$\|h^\downarrow_{a}(\mu_a)\|\leq 1/(2t)$.
From \req{eq:mu}, for any given odd $\gamma$ and 
uniform odd $a\in [2^\ell]$, 
\begin{eqnarray*}
\Pr[\mu_a=\gamma]&\geq&1/t-\sum_{p\ {\rm prime\ factor\ of}\ x} 2/(pt).
\end{eqnarray*}
Using Lemma \ref{lem:cond-mu}, we can
now do essentially the same calculations as
in \req{eq:calc}. For uniform odd $a\in[2^\ell]$,
uniform $b\in[2^\ell]$, and uniform $q\in[u]\setminus[n]$, 
we get
\begin{eqnarray*}
\Pr[h_{a,b}(q)<\min h_{a,b}([n])]&\geq&
\sum_{\textnormal{odd }\gamma=1}^{n/4} \Pr[\mu_a=\gamma]\,\Pr[h_{a,b}(q)<\min h_{a,b}([n])\mid \mu_a=\gamma]\\
&\geq &\sum_{\textnormal{odd }x=1}^{n/4} \left(1-2\sum_{{\rm prime\ factor}\ p{\rm\ of}\ x} 1/p\right)/(16\gamma\, t)\nonumber\\
&>&
\sum_{\textnormal{odd }\gamma=1}^{n/4} \left(1-2\sum_{{\rm prime}\ p=3,5,..} 1/p^2\right)/(16\gamma\, t)\nonumber\\
&>&
\sum_{\textnormal{odd }\gamma=1}^{n/4} 0.594\,/(16\gamma\,t)\nonumber\\
&>&H_{n/4}/(128n)=\Omega((\log n)/n).
\end{eqnarray*}
This completes the proof of Theorem \ref{thm:bad-multiply-shift-min}.
\end{proof}

\paragraph{Acknowledgments} I would like to thank some very 
thorough reviewers who came with numerous good suggestions 
for improving the presentation of this paper, including the fixing of 
several typos.

{

}

\end{document}